\newtheorem{theorem}{Theorem}
\newtheorem{lemma}{Lemma}
\begin{document}

\title{Fast, Fair and Truthful Distributed Stable Matching for Common Preferences}

\author{Juho Hirvonen}
\authornote{Both authors contributed equally to this research.}
\email{juho.hirvonen@aalto.fi}
\affiliation{%
\institution{ Helsinki Institute for Information Technology HIIT and Aalto University}
  \city{Espoo}
  \country{Finland}
}

\author{Sara Ranjbaran}
\email{sara.ranjbaran@aalto.fi}

\affiliation{%
  \institution{ Aalto University}
  \city{Espoo}
  \country{Finland}
}
\
\begin{abstract}
  Stable matching is a fundamental problem studied both in economics and computer science. The task is to find a matching between two sides of agents that have preferences over who they want to be matched with. A matching is stable if no pair of agents prefer each other over their current matches. The deferred acceptance algorithm of Gale and Shapley solves this problem in polynomial time. Further, it is a mechanism: the proposing side in the algorithm is always incentivised to report their preferences truthfully.
  
  The deferred acceptance algorithm has a natural interpretation as a distributed algorithm (and thus a distributed mechanism). However, the algorithm is slow in the worst case and it is known that the stable matching problem cannot be solved efficiently in the distributed setting. In this work we study a natural special case of the stable matching problem where all agents on one side share common preferences. We show that in this case the deferred acceptance algorithm does yield a fast and truthful distributed mechanism for finding a stable matching. We show how algorithms for sampling random colorings can be used to break ties fairly and extend the results to fractional stable matching.
\end{abstract}

\begin{CCSXML}
<ccs2012>
<concept>
<concept_id>10003752.10010070.10010099.10010101</concept_id>
<concept_desc>Theory of computation~Algorithmic mechanism design</concept_desc>
<concept_significance>500</concept_significance>
</concept>
<concept>
<concept_id>10003752.10003809.10010172</concept_id>
<concept_desc>Theory of computation~Distributed algorithms</concept_desc>
<concept_significance>500</concept_significance>
</concept>
</ccs2012>
\end{CCSXML}

\ccsdesc[500]{Theory of computation~Algorithmic mechanism design}
\ccsdesc[500]{Theory of computation~Distributed algorithms}

\keywords{stable matching, deferred acceptance algorithm, local algorithm, mechanism design}
\maketitle

\section{Introduction}

\begin{figure}[t]
	\centering
	\includegraphics[width=0.9\textwidth]{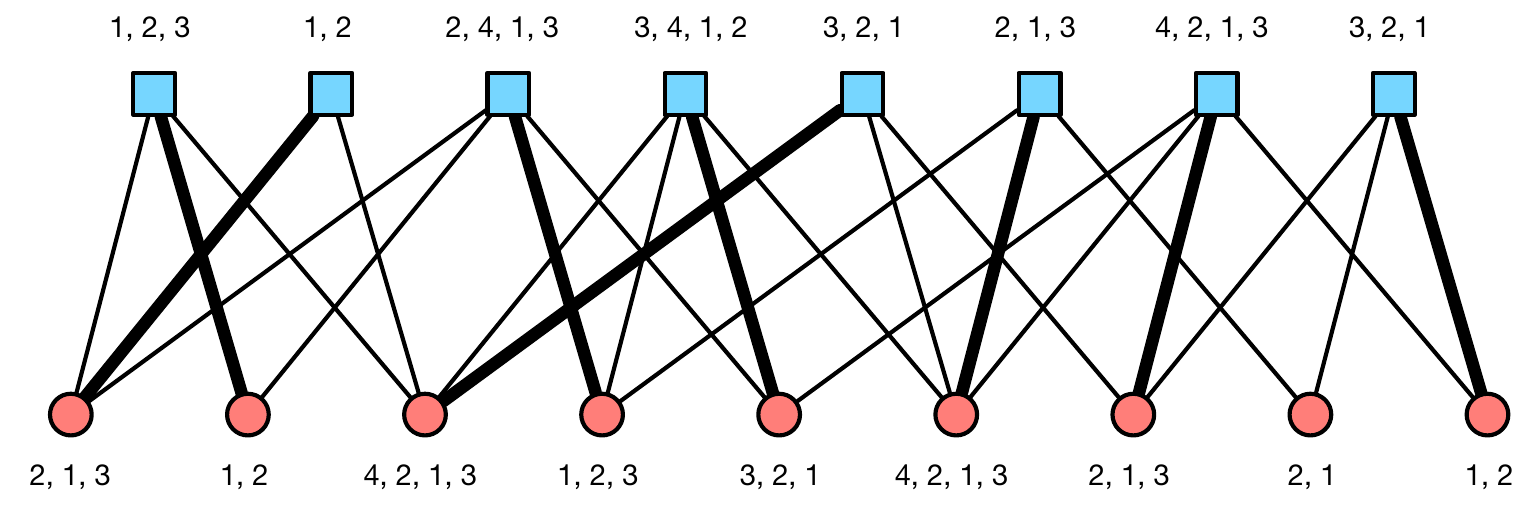}
	\caption{A matching instance and a stable matching in bold. The preferences are given as a list for the neighbours from left to right, with 1 indicating the highest preference. The matching is \emph{stable}: there is no pair of unmatched agents connected by an edge such that they would prefer each other over their current match. Since there are more agents on one side, some agents must remain unmatched.} \label{fig:sm}
\end{figure}

In mechanism design the goal is to design optimisation algorithms that work when the input is private information held by strategic agents. For example, in the case of auctions, each agent has a private valuation for the item being sold and the goal is find an auction mechanism that incentivises the agents to reveal their true valuations while assigning the item to the agent that values it the most.

Stable matching is one of the seminal problems studied in mechanism design~\cite{nobel12}. There is a bipartite matching graph where each node is an agent and each edge indicates a potential match. The agents have a linear order over their neighbours that is initially only known to them. Each agent reports \emph{some} linear order and based on these an algorithm computes a matching. The \emph{deferred acceptance} algorithm of Gale and Shapley~\cite{gale62sm} finds a stable matching: there is no pair of agents that would prefer each other over their assigned matches. See Figure~\ref{fig:sm} for an illustration.

The algorithm of Gale and Shapley consists of rounds of proposals. One side of the bipartition is the proposing side and one is the receiving side. In each round each unmatched proposing-side agent proposes to its next most preferred match. Agents that receive proposals accept the best proposal if they prefer it over their current match (deferred acceptance).

The deferred acceptance algorithm has the important property that, for the proposing side, revealing their true preferences is a dominant strategy~\cite{dubins81machiavelli,roth82economics}. This makes the deferred acceptance algorithm one of the few examples of a mechanism without money. It is also known that there is no mechanism that finds a stable matching and is truthful for all agents~\cite{roth82economics}.

The deferred acceptance algorithm can be applied to any setting where two sides need to matched based on preference criteria held by the individual agents. The original motivation of Gale and Shapley was college admission: how to match students to school seats~\cite{gale62sm}. The algorithm naturally generalises to this kind of one-to-many matching, where a single school can be matched to many students. Due to its simplicity and strategic properties, the deferred acceptance algorithm has also been widely deployed in practice~\cite{nobel12,roth2008deferred}, for example in the National Resident Matching Program (NRMP) in the US~\cite{roth84evolution} and New York City and Boston public high schools~\cite{roth08what}.

The deferred acceptance algorithm has a natural interpretation as a distributed proposal algorithm. However, its worst-case running time is $O(n^2)$ rounds~\cite{ostrovsky2015fast} and computing a stable matching requires $\Omega(n)$ rounds in the LOCAL model~\cite{floreen10stable}. We study a special case of the general stable matching problem where one side has common preferences, i.e.\ the preferences of all agents are restrictions of the same partial order~\cite{irving2008stable}. This special case corresponds to the case where the preferences of one side come from some objective ranking, such as standardised tests in school choice. This is a common assumption when studying matching between autonomous agents, for example in biology~\cite{ALPERN05}, economics~\cite{burdett97marriage}, game theory~\cite{eriksson2008instability}, and resource assignment in computer networks~\cite{GaoHY22}.

\subsection{Our contribution}

We present several extensions of the deferred acceptance algorithm in the setting where one of the sides has common preferences. Our algorithms implement the deferred acceptance algorithm as an efficient distributed mechanism~\cite{feigenbaum07distributed}, and thus can be applied in settings where a single centralised entity running the mechanism is infeasible. Our algorithms are simple and retain the potential of the original deferred acceptance algorithm for practical applications. 

We will next present informal versions of our main results. We show that for one-sided common preferences, the deferred acceptance algorithm can be implemented efficiently as a distributed algorithm. Let $S$ denote the number of different preference classes in the common preferences, and let $\Delta$ denote the product of the maximum degrees of the two sides in the matching graph. Let $n$ denote the number of agents on the proposing side.

\begin{theorem}[Informal, Theorem~\ref{thm:da-arbitrary-tb}]
	There is an incentive-compatible distributed implementation of the deferred acceptance algorithm in the CONGEST model  for one-sided common preferences, with running time $O(\Delta S + \log^* n)$.
\end{theorem}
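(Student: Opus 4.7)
The plan is to exploit the common-preference structure by executing the deferred acceptance algorithm in $S$ phases, one per preference class, processed from the most to the least preferred class. The key observation is that under one-sided common preferences every receiver strictly prefers any proposer of class $k$ to any proposer of class $k+1$, so once phase $k$ terminates no proposer from a later class can ever displace an existing match; the phases can therefore be committed sequentially, and this serialisation does not change the output of DA.

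Within a single phase, all active proposers share the same class and are indistinguishable to the receivers, so a deterministic tiebreaking rule is needed. Before any matching begins, I would invoke Linial's colouring algorithm on the conflict graph $H$ on the proposer side, in which two proposers are adjacent iff they share a receiver in the matching graph. Since the maximum degree of $H$ is at most $\Delta$, the graph $H$ admits a proper colouring with $O(\Delta^2)$ colours computable in $O(\log^* n)$ CONGEST rounds. Throughout the algorithm, receivers break ties among same-class proposals by selecting the proposer with the smallest colour; this is a well-defined total order at every receiver because the colouring is proper on $H$.

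Phase $k$ then becomes a distributed DA instance restricted to class $k$, with the colour-based tiebreaking providing strict preferences on both sides. Each proposer traverses its list of at most $\Delta$ neighbours, issuing one proposal per round and either being held by a receiver or moving on to its next choice; each receiver holds its best current proposal and rejects anything worse. Since each message travels along a dedicated proposer--receiver edge and no edge carries more than one message per direction per round, the phase terminates in $O(\Delta)$ CONGEST rounds. Summing over the $S$ phases and adding the preprocessing yields the claimed $O(\Delta S + \log^* n)$ bound.

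Incentive compatibility follows because the algorithm faithfully simulates Gale--Shapley using a tiebreaking rule determined solely by the graph and the precomputed colouring, independently of any reported preference. The classical Dubins--Freedman--Roth argument then implies that truthful reporting remains a dominant strategy for the proposing side; the receivers do not report anything, as their common ranking is objective. The main obstacle I anticipate is formally verifying the per-phase $O(\Delta)$ round bound in the CONGEST model, since a proposer currently held at some receiver may be displaced when a smaller-coloured same-class proposer arrives; handling this cleanly likely requires a potential argument showing that the total work per proposer remains $O(\Delta)$ and that per-edge congestion stays at $O(1)$ messages per round thanks to the colour schedule.
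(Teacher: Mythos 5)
Your high-level architecture (precompute a preference-independent coloring of the client conflict graph, use it to break ties, argue truthfulness via Dubins--Freedman--Roth) matches the paper, and your serialisation-by-class observation is essentially the paper's Lemma~\ref{lem:da-common-structure}. But there is a genuine gap exactly where you flag one: the per-phase round bound. If the coloring is used only as a tie-breaking rule \emph{at the receivers} while proposers within a class run ordinary propose/reject dynamics, displacement cascades are not eliminated. Along any chain of displacements within a phase the displacing proposer's colour is strictly smaller than the displaced one's, so colours strictly increase along the chain; between two displacement events a single proposer may be rejected up to $\deg(v)\le\Delta_C$ consecutive times before it next displaces anyone. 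The critical chain therefore has length up to $\Theta(c\cdot\Delta_C)$ per phase, not $O(\Delta)$, and no potential argument will rescue $O(\Delta)$ because the adversary controls which receivers already hold smaller-coloured proposers. The paper avoids this entirely by using the coloring not as a tie-breaker but as a \emph{scheduler}: it defines a refined score $(s\circ\phi)(v)=c(s(v)-1)+\phi(v)$ with $Sc$ classes that is a strict linear order in every provider's neighbourhood (Lemma~\ref{lem:linear-order-extension}), so in each two-round slot exactly one refined class proposes, every proposal is immediately and permanently accepted, and there are no rejections or displacements at all --- giving exactly $2Sc-1$ rounds (Lemma~\ref{lem:da-common-preferences}).

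A second, quantitative gap: you invoke Linial's algorithm, which yields $O(\Delta_H^2)$ colours. Since the number of colours multiplies into the matching phase (as $Sc$ in the paper's scheme, or via the chain bound in yours), $O(\Delta_H^2)$ colours gives at best $O(S\Delta_H^2)$, not $O(S\Delta_H)$. The paper instead uses the Maus--Tonoyan $(\Delta+1)$-colouring algorithm, which produces $\Delta_H+1$ colours in $O(\sqrt{\Delta_H\log\Delta_H}+\log^* n)$ rounds; this term is absorbed into $O(S\Delta_H+\log^* n)$. Also note the paper's conflict graph only joins same-class clients sharing a provider, which is what makes the refined score well-defined; your class-oblivious conflict graph is a harmless supergraph but is not needed once you adopt the scheduling viewpoint.
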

The formal version of this Theorem is given in Theorem~\ref{thm:da-arbitrary-tb} in Section~\ref{sec:ff-da}.

Our algorithms are incentive-compatible, i.e.\ the agents on the proposing side are never worse off by reporting their true preferences. We believe that this is the first example of an efficient ($O(\operatorname{poly} \log n)$-time) distributed incentive-compatible mechanism.

We also show how to extend the algorithm to the fractional case that we call \emph{capacitated fractional stable matching}. The matching itself can take fractional values and both sides have a maximum capacity. This can be seen as a generalisation of stable matching to a setting where the clients have divisible jobs that are assigned to the providers for completion.

For general preferences, this setting is not well defined, as one must first define how agents preferences weight quality of match against the total amount of a match~\cite{CaragiannisFKV19}. We show that in the case of one-sided common preferences, this is not an issue and the natural greedy extension of the deferred acceptance algorithm is incentive-compatible as the truth maximises both both the quality and the quantity of the matching.

\begin{theorem}[Informal, Theorem~\ref{thm:fractional-det-da}]
	There is an incentive-compatible distributed implementation of a generalised deferred acceptance algorithm for capacitated fractional stable matching in the CONGEST model for one-sided common preferences, with running time $O(\Delta S + \log^* n)$.
\end{theorem}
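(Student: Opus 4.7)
The plan is to adapt the distributed algorithm of Theorem~\ref{thm:da-arbitrary-tb} to the fractional capacitated setting by replacing single proposals with capacity-weighted ones, and then verify that fractional stability, incentive compatibility, and the round complexity all survive the generalisation. The algorithm I would describe is the natural greedy extension: because the receivers share a common partial order with $S$ classes, the proposing side is partitioned into $S$ equivalence classes $C_1 \succ \cdots \succ C_S$, which I process in order. While class $C_k$ is active, each proposer $p \in C_k$ tracks its remaining capacity and offers all of it to its most preferred not-yet-saturated receiver; a receiver accumulates incoming capacity until its own budget is exhausted, always preferring proposers from an earlier class and using the same colouring-based tie-break as the integer version within a class. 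A proposer whose offer is only partially absorbed carries the residual to its next preferred receiver. To prove fractional stability I would use the same class-by-class invariant as in the integer case: once $r$ is saturated, every quantum of its capacity is held by a proposer weakly preferred to any still-hungry $p$, so no pair $(p,r)$ can mutually improve on their assignment.

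Incentive compatibility is where I expect the main difficulty, because in the fractional setting a strategic misreport could in principle trade quantity for quality. The key claim I would need to prove is that, under one-sided common preferences, truthful reporting simultaneously maximises, coordinate-wise, the output vector $(x_{p,r_1}, x_{p,r_2}, \ldots)$ indexed by $p$'s true ranking $r_1 \succ_p r_2 \succ_p \cdots$. I would argue by induction on $i$ that in the truthful run $r_i$ is filled from $p$ up to the minimum of $p$'s residual capacity and the slack that $r_i$ has left after serving all proposers strictly preferred to $p$ in the common order, and that this ceiling is independent of $p$'s report: the common order alone decides which offers receivers prefer over $p$'s, and the algorithm's behaviour on the earlier classes never sees $p$'s report at all. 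Since truthful greedy fills $p$'s list from the top down to these fixed ceilings, any misreport can only push capacity from an earlier $r_i$ to a later one, which is dominated both in the quality ordering and in total matched quantity.

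The distributed implementation then inherits the schedule of Theorem~\ref{thm:da-arbitrary-tb} verbatim: compute a proper colouring of the conflict graph in $O(\log^* n)$ rounds, then process the $S$ classes sequentially; within each class the same $O(\Delta)$-round proposal/response loop applies, now carrying $O(\log n)$-bit capacity values rather than accept/reject bits, which still fits in CONGEST. The overall round complexity is $O(\Delta S + \log^* n)$, matching the integer case.
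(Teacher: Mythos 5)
Your proposal follows essentially the same route as the paper: a greedy batch-proposal extension of deferred acceptance, scheduled class by class with a precomputed colouring as the tie-break, with incentive compatibility established by the observation that the capacity ceilings seen by a client are fixed by the common order and the higher classes independently of its report, so truthful greedy filling dominates any misreport. The one imprecision is that the key claim cannot be literal coordinate-wise maximisation of $(x_{p,r_1}, x_{p,r_2}, \dots)$ (assigning more to $r_1$ necessarily leaves less for $r_2$); the correct statement, which is what your ``misreports only push capacity later'' argument actually proves and what the paper states as Lemma~\ref{lem:batch-da-monotone}, is that truth maximises every prefix sum $\sum_{j \leq i} x_{p,r_j}$ simultaneously.
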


We assume that the preferences may contain ties. In order to retain the common preferences, the tie-breaking procedure must be consistent: If each agent on the receiving side would break ties arbitrarily, in the worst case this tie-breaking could lead to a long propagation in the deferred acceptance algorithm. This issue is solved by using graph coloring as a tie-breaking procedure. While any coloring is incentive-compatible, we also want the coloring to be fair. We propose a simple and fast algorithm for sampling a random coloring and also show how existing algorithms for sampling random colorings can be used as a subroutine in our algorithm.

The first algorithm samples an almost uniformly random coloring using an existing subroutine~\cite{fischer18simple}. Let $\delta > 0$ denote the \emph{total variation distance} from uniform random coloring.
\begin{theorem}[Informal, Theorem~\ref{thm:fair-almost-uniform}]
		There is an incentive-compatible, randomised distributed implementation of the deferred acceptance algorithm in the CONGEST model for one-sided common preferences with a tie-breaking routine that has total variation distance $\delta > 0$ from random and running time $O(\Delta S + \log (n/\delta))$.
\end{theorem}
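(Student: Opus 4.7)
The plan is to compose two subroutines: a randomized almost-uniform coloring procedure that breaks ties on the receiving side, and the deterministic distributed deferred acceptance algorithm of Theorem~\ref{thm:da-arbitrary-tb}. The common preferences partition the proposing side into at most $S$ tie classes; within each class we must impose a strict order that is consistent across all receivers so that the refined preference profile is still common. Viewing the proposing side as a conflict graph in which two agents are adjacent if they share a receiver (maximum degree bounded in terms of $\Delta$), a proper coloring of this graph with a palette of size $O(\Delta)$ gives exactly such a consistent global tie-breaker: if two proposers in the same tie class are both visible to some receiver, they get distinct colors, and ordering by color induces a total order on each tie class that every receiver agrees on.

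The first step is to sample this coloring using the local Glauber dynamics of~\cite{fischer18simple}. That routine returns a proper coloring drawn from a distribution whose total variation distance to the uniform distribution on proper colorings is at most $\delta$, running in $O(\log(n/\delta))$ rounds of CONGEST; because the protocol exchanges only messages encoding colors and local acceptance decisions, $O(\log \Delta)$ bits per round suffice and bandwidth is not an issue. Ties are then broken by the numerical order of colors, producing a strict common order. The algorithm of Theorem~\ref{thm:da-arbitrary-tb} is invoked on this strictified instance, contributing an additional $O(\Delta S + \log^* n)$ rounds. Summing gives the claimed complexity $O(\Delta S + \log(n/\delta))$.

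For incentive compatibility, observe that the coloring is computed as a function of only the graph topology, the common partial order declared by the receiving side, and independent random bits; it does not depend on any preference reported by a proposing agent. Therefore, for every realization of the coloring, the induced strict-preference instance falls within the scope of Theorem~\ref{thm:da-arbitrary-tb}, where truthful reporting is a dominant strategy for the proposing side. Since dominance holds pointwise over the random coloring, it holds unconditionally, and the composed mechanism is incentive-compatible.

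The main obstacle I anticipate is calibrating the sampler to the particular graph at hand: the palette must be large enough (a suitable constant multiple of the conflict-graph degree) for both a valid proper coloring to exist and the Glauber dynamics of~\cite{fischer18simple} to mix to within $\delta$ in $O(\log(n/\delta))$ rounds, and one must verify that the guarantees stated for LOCAL can be executed within the CONGEST bandwidth. Both points are handled by choosing the palette as a constant multiple of $\Delta$ and observing that the required messages are of size $O(\log \Delta)$, which is swallowed by the CONGEST $O(\log n)$ bandwidth budget.
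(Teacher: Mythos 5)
Your proposal is correct and follows essentially the same route as the paper: sample an almost-uniform proper coloring of the client conflict graph with the Fischer--Ghaffari routine, use colors to break ties into a strict common order, and run the deferred acceptance template, with incentive compatibility holding pointwise over the randomness because the coloring is independent of reported client preferences. The only step the paper makes explicit that you leave implicit is why the total variation bound on the \emph{coloring} transfers to the \emph{tie-breaking rule} (each order is induced by the same number of colorings, so the pushforward preserves the bound), but this is a minor omission.
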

A formal version of this Theorem is given in Theorem~\ref{thm:fair-almost-uniform}, Section~\ref{ssec:sample-coloring}.

The second algorithm uses a simpler color sampling that only guarantees that some subset of agents succeed in being properly colored. These agents have a tie-breaking rule that is conditionally uniform, but the remaining agents have an arbitrary tie-breaking.
\begin{theorem}[Informal, Theorem~\ref{thm:fast-fair-da}]
	For any $\delta > 0$ there is an incentive-compatible, randomised distributed implementation of the deferred acceptance algorithm in the CONGEST model for one-sided common preferences with a tie-breaking routine that is uniformly random for a subset of $(1-\delta)n$ proposing-side agents and runs in time $O(\delta^{-1}\Delta S + \log^* n)$.
\end{theorem}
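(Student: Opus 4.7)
The plan is to replace the Markov-chain sampler used in Theorem~\ref{thm:fair-almost-uniform} by a one-shot independent random coloring of the proposing-side agents. Within a single preference class, two proposing-side agents have to be distinguished by the tie-breaker only if they share some receiving-side neighbor, so the relevant conflict graph has maximum degree at most $\Delta$. I would have each proposing-side agent independently sample a color uniformly at random from a palette of size $k = \lceil \Delta/\delta \rceil$. An agent is \emph{successful} if its color differs from every conflict-graph neighbor's color, and otherwise it falls back to an arbitrary consistent tie-breaker (for example its identifier). Both the sampling and the ``successful'' test are computed locally in $O(1)$ rounds, since the conflict graph has radius two in the matching graph.

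First I would bound the failure rate: by a union bound over at most $\Delta$ conflicting neighbors, each agent is successful with probability at least $1 - \Delta/k \geq 1 - \delta$. Hence the expected number of successful proposing-side agents is at least $(1-\delta)n$, which gives the desired set either in expectation or, after a constant number of independent repetitions and a Chernoff-style argument, with high probability. Next I would verify conditional uniformity. By symmetry of the palette, $\Pr[C_v = c \mid v \text{ is successful}] = 1/k$ for every color $c$, and for any subset $T$ of pairwise-conflicting successful agents the joint distribution of $(C_v)_{v \in T}$ conditioned on ``all of $T$ successful'' is uniform over proper colorings of $T$. In particular, for any tie class lying entirely inside the successful set, every relative ordering of its members is equally likely, which is exactly the ``uniformly random tie-breaking for a subset of $(1-\delta)n$ agents'' property asked for in the statement.

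I would then feed the combined tie-breaker (colors for successful agents, identifier fallback for the rest) into the deterministic deferred acceptance algorithm of Theorem~\ref{thm:da-arbitrary-tb}, viewing it as a total-order oracle inside each preference class. Because the palette has $k = O(\Delta/\delta)$ values and there are $S$ preference classes, the algorithm sweeps through $O(kS) = O(\delta^{-1}\Delta S)$ priority levels, giving total round complexity $O(\delta^{-1}\Delta S + \log^* n)$ in the CONGEST model, with the additive $\log^* n$ coming from the same consistent local numbering used in Theorem~\ref{thm:da-arbitrary-tb}. Incentive compatibility transfers directly: the coin flips and the identifier fallback are independent of the preferences reported by the proposing side, so truthful reporting remains a dominant strategy by the same argument that established incentive compatibility in the deterministic case.

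The main obstacle I expect is making the ``uniform for a subset'' claim airtight: the event ``$v$ is successful'' depends on $v$'s neighbors' colors, so at first glance the conditioning could bias the colors of the surviving agents. I would resolve this either by the symmetry argument sketched above, or, equivalently, by coupling with the two-stage process that first marks each agent active with probability $1-\delta$ and then colors only the active agents uniformly from a palette of size $\Theta(\Delta)$; the latter view makes subset-uniformity manifest. The remaining ingredients — the union bound, the palette size calculation, and the black-box invocation of Theorem~\ref{thm:da-arbitrary-tb} — are routine.
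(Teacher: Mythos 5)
Your overall strategy matches the paper's: a one-shot uniform sample from a palette of size $\Theta(\Delta/\delta)$ on the conflict graph, a union bound giving per-node success probability $1-\delta$, conditional uniformity for the successful set, and then the deterministic deferred acceptance machinery of Theorem~\ref{thm:da-arbitrary-tb}. The conditional-uniformity concern you raise is handled in the paper at the same level of detail (``conditioned on having no conflicts, each color is equally likely''), so that part is fine.

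There is, however, one genuine gap: your fallback for the unsuccessful agents. You propose that a failed agent reverts to ``an arbitrary consistent tie-breaker (for example its identifier).'' Identifiers range over $[\operatorname{poly}(n)]$, so this fallback can create up to $n$ distinct priority levels inside a single original preference class, and the running time of deferred acceptance under common preferences is governed precisely by the number of distinct priority levels (Lemma~\ref{lem:da-common-preferences}: $2S'-1$ rounds for $S'$ classes). Concretely, a chain of failed clients $1,2,\dots$ in the same class, where clients $1$ and $2$ share first choice $p_1$ and client $i$ has first choice $p_{i-1}$ and second choice $p_i$, produces the classic displacement cascade: client $i$ is only settled at round $2i-1$, so the matching phase takes $\Omega(n)$ rounds, destroying the claimed $O(\delta^{-1}\Delta S)$ bound. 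This is exactly the ``long propagation'' problem that motivates coloring-based tie-breaking in the first place. The paper's Lemma~\ref{lem:fast-eps-fair-coloring} instead runs a deterministic $(\Delta+1)$-coloring algorithm on the failed set using a fresh palette of $\Delta+2$ colors placed entirely \emph{after} the random palette; this keeps the total number of classes at $O(S(\delta^{-1}\Delta+\Delta))$, ranks failed agents below successful ones within each class, and is the actual source of the additive $O(\sqrt{\Delta\log\Delta}+\log^* n)$ term (which you instead attribute, vaguely, to ``consistent local numbering''). Replacing your identifier fallback with a proper $O(\Delta)$-coloring of the failure set on a disjoint, lower-priority palette closes the gap.
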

The formal version is given in Theorem~\ref{thm:fast-fair-da}, Section~\ref{ssec:fast-coloring-failures}.

Both of these fair tie-breaking procedures can also be applied to capacitated fractional stable matching, yielding algorithms with the same running times and guarantees.

Since finding a stable matching with general preferences is hard in the distributed setting~\cite{floreen10stable}, our results suggest a trade-off for market design: in certain scenarios, it might be preferable to force one of the sides to use common preferences in order to make the matching procedure more efficient.

\subsection{Related work}

The \emph{deferred acceptance} or \emph{Gale-Shapley algorithm} computes a stable matching: a matching (i.e.\ a subset of edges such that each agent belongs to at most one edge) such that no unmatched potential pair prefers each other over their assigned matches~\cite{gale62sm}. As a mechanism it is incentive-compatible for the proposing side of the algorithm~\cite{dubins81machiavelli,roth82economics}. There does not exist a mechanism (without payments) that is incentive-compatible for both sides at the same time~\cite{roth82economics}. We refer to a survey by Knuth~\cite{knuth1997stable} and the survey by Roth~\cite{roth2008deferred} for more details.

Irving, Manlove, and Scott studied the stable matching problem with common preferences (called master preference lists there)~\cite{irving2008stable}. This setting was also studied by Scott~\cite{scott05thesis} and O'Malley~\cite{omalley07thesis}. Generalisations of this setting were studied by Kamiyama~\cite{kamiyama15,kamiyama19}.

In the distributed setting, Khanchandani and Wattenhofer studied a variant of stable matching where preference lists are similar on one side~\cite{khanchandani16distributed}. The running time of their algorithm is $O(\Delta n)$, where $\Delta$ is the similarity parameter. Amira, Giladi, and Lotker studied a variant where the preferences are given as edge weights and gave an algorithm that finds a stable matching in $O(\sqrt{n})$ rounds~\cite{amira10distributed}. 

Flor{\'{e}}en, Kaski, Polishchuk, and Suomela~\cite{floreen10stable} showed that finding a stable matching requires $\Omega(n)$ rounds in the LOCAL model. Kipnis and Patt-Shamir showed that solving the stable matching problem, even in graphs of diamater $\Theta(n)$, requires $\Omega(\sqrt{n}/\log n)$ rounds in the CONGEST model. Most variants of stable matching are at least as hard as finding a maximal matching, and this is known to require $\Omega(\Delta)$ rounds, where $\Delta$ is the maximum degree of the graph~\cite{balliu21lower}.

Since finding a stable matching is hard in the distributed setting, there has been a lot of interest in finding almost stable matchings. An edge is unstable if it is not in the matching but both of its endpoints prefer each other to their current matches, and a matching $M$ is $\varepsilon$-stable if the number of unstable edges is at most $\varepsilon|M|$. Flor{\'{e}}en, Kaski, Polishchuk, and Suomela~\cite{floreen10stable} showed that truncating the deferred acceptance algorithm after $O(\Delta^2 / \varepsilon$) rounds results in an $\varepsilon$-stable matching. Ostrovsky and Rosenbaum present an $O(\log^5 n)$-round deterministic algorithm and an $O(\log^2 n)$-round randomised algorithm for computing almost stable matchings. Hassidim, Mansour, and Vardi \cite{hassidim2016local} study the problem in the centralised \emph{local computation algorithm (LCA)} model. It is important to note that none of these algorithms have the strategic guarantees of the original deferred acceptance algorithm, as we discuss in Section~\ref{ssec:da}.

We also study fractional stable matchings, which have been studied in the literature due to their connection to integral stable matchings~\cite{vandevate89linear,roth93stable}. Caragiannis, Filos{-}Ratsikas, Kanellopoulos, and Vaish \cite{CaragiannisFKV19} resolve the issue ill-defined utility by studying cardinal preferences. See their work for further references on the subject.

Our algorithms can be seen as examples of \emph{distributed algorithmic mechanism design (DAMD)}~\cite{feigenbaum07distributed}. In DAMD the goal is to design mechanisms that can be run by the participating agents. Work in this field has typically dealt with distributed mechanisms for global problems in distributed networks, such as multi-cast cost sharing~\cite{feigenbaum01sharing}, leader election~\cite{AbrahamDH19}, consensus, and renaming~\cite{AfekGFS13}. We are not aware of any distributed mechanisms implemented by a local algorithm before our work.

\vspace{-\baselineskip} 
\vspace*{10pt}
\section{Background and definitions} \label{sec:def}

\subsection{Distributed algorithms}

We study distributed algorithms in the standard LOCAL and CONGEST models of distributed computing~\cite{Peleg2000,Linial1992}. The system is represented as a graph $G = (V,E)$ where the set of nodes $V$ are the computational agents and the edges $E$ are the communication links: two nodes $u$ and $v$ can communicate directly if and only if $\{u,v\} \in E$. Nodes connected to $u$ are its neighbors and are denoted by $N(u)$. The degree (number of neighbours) of node $v$ is denoted by $\deg(v)$. The maximum degree $\Delta$ of a graph $G$ is $\max_{v \in V} \deg(v)$. We assume that $n$ and $\Delta$ are known to all nodes. Additionally, as input, each node has a unique identifier from $[\operatorname{poly}(n)]$. 

Computation proceeds in synchronous rounds. In each round, each node can send a message to each of its neighbors, receive the messages sent by its neighbors, and update its state. In the LOCAL model these messages can be unbounded in size, and the CONGEST model only differs from the LOCAL in that the size of the messages is $O(\log n)$ bits. Eventually each node must stop and announce its own output. The running time of an algorithm is the number of communication rounds until all nodes have stopped.

Unless otherwise stated, all algorithms are deterministic. In randomised algorithms, nodes additionally have a uniformly random and sufficiently long string of bits as input.

\subsection{Mechanism design}

A mechanism $M$ is a game: It consists of a set $V$ of \emph{agents} and each agent $v \in V$ has a private \emph{type} $t(v)$ representing its utility function. In the case of stable matching, the type of the agent is its preference relation over its neighbours. For each agent there is a set of possible \emph{actions} $A(v)$ representing the possible ways of reporting its type (preferences) to the mechanism. The \emph{strategy} $a(v)$ of agent $v$ is its chosen action. The mechanism has an \emph{allocation function} $F$ that maps the strategies of the agents to a set $O$ of possible \emph{outcomes}.

A mechanism is \emph{incentive-compatible} or \emph{truthful} if reporting the truth (i.e.\ choosing $a(v) = t(v)$) is a dominant strategy: agent is never worse off by reporting truthfully. For randomised mechanisms, we say that a mechanism is (universally) incentive-compatible if truth is a dominant strategy for all choices of randomness.

In this work we consider \emph{distributed mechanisms} where the allocation function can be implemented as a distributed algorithm. Each agent corresponds to a node in graph $G$. The strategy of each agent is part of the input of the corresponding node. After the strategies are revealed in the input, the agents can no longer affect the mechanism and we assume that the algorithm implementing the allocation rule is run.

\subsection{Stable matching and deferred acceptance} \label{ssec:da}

We study the graph-theoretic version of the stable matching problem. The agents form a bipartite matching graph $G = (C \cup P, E)$. We call the side $C$ \emph{clients} and the side $P$ \emph{providers}: the clients correspond to the proposing side in the deferred acceptance algorithm. The edges represent the possible matches of each agent, and these are all assumed to be preferable to being unmatched. We assume that the communication network $N = (C \cup P, E')$ is $G$ with additional edges between any pair of clients that are connected to the same provider. The modelling choice is somewhat arbitrary, and for most parameter values it allows us to simplify the description of the CONGEST algorithms. For some large values of $\Delta$, the maximum degree of the network, there is a difference between the running time of the best LOCAL and CONGEST algorithms if the matching graph is the communication network. We discuss this in Section~\ref{ssec:congest-col}.

When we refer to degrees, we mean the degrees of the matching graph unless otherwise specified. We denote by $\Delta_C$ the maximum degree among the clients and by $\Delta_P$ the maximum degree among the providers, and assume that these are known to the agents.

The type of each client and each provider is a linear order on its incident providers, determining in which order it prefers to be matched with them. The basic deferred acceptance algorithm can be implemented as a distributed algorithm as follows. Let $>_v$ denote the announced preferences of agent $v$. Each client $v$ initialises a list $L(v)$ of its $\deg(v)$ neighbours in the preference order, starting from the most preferred. Each client and provider $v$ maintains a match variable $m(v)$ initially set to unmatched.

On odd rounds $2i-1$ for $i=1,\dots$:
\begin{enumerate}[noitemsep]
  \item Each unmatched client $v$ sends message \emph{proposal} to its most preferred match in $L(v)$ and removes it from $L(v)$.
  \item Each matched client waits.
  \item Each provider $p$ that receives one or more proposals picks the one that is from the most preferred client $v$, and checks if $v >_p m(p)$. If yes, set $m(p) \leftarrow v$.
\end{enumerate}

On even rounds $2i$ for $i=1,\dots$:
\begin{enumerate}[noitemsep]
  \item Each provider $p$ sends $m(p)$ to its neighbours (as acknowledgement of proposal acceptance or rejection).
  \item Each client $v$ that receives $m(p) = v$ from a neighbour sets $m(v) \leftarrow p$.
\end{enumerate}

The algorithm runs in $O(n^2)$ rounds. It is known that computing a stable matching requires $\Omega(n)$ rounds~\cite{floreen10stable}.

\begin{theorem}[Floreen et al.~\cite{floreen10stable}] \label{thm:linear-lb}
  Computing a stable matching in a system where each client and provider has maximum degree 2 requires $\Omega(n)$ rounds in the LOCAL model.
\end{theorem}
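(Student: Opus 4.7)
The plan is to use the standard indistinguishability technique for LOCAL lower bounds. I will exhibit two instances on paths of maximum degree $2$ with identical local neighborhoods around some middle node $v$, but whose stable matchings must assign $v$ to different partners. Since any $t$-round LOCAL algorithm produces the same output at $v$ in any two instances whose $t$-neighborhoods around $v$ (including identifiers and announced preferences) are isomorphic, this forces $t = \Omega(n)$.

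First I would construct two open paths $P_1$ and $P_2$ of length $\Theta(n)$ in which the agents alternate between clients and providers. I assign preferences so that every interior client strictly prefers its right neighbour to its left neighbour, and every interior provider strictly prefers its left neighbour to its right neighbour. On such a path, the deferred acceptance algorithm (or a direct induction from either endpoint) certifies that the stable matching is unique and is one of the two perfect alternating matchings of the path, with the choice determined by which side of the bipartition lies at each endpoint. I choose $P_1$ to begin with a client on the left and end with a provider on the right, and $P_2$ to begin with a provider and end with a client; both are valid bipartite instances, and the unique stable matchings on $P_1$ and $P_2$ are the two different alternating matchings of the underlying path. In particular, the edge containing the middle vertex $v$ is matched in one instance but unmatched in the other, so any correct algorithm must output different matches at $v$.

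Second, I would fix identifiers on $P_1$ and $P_2$ so that the $t$-balls around $v$ are isomorphic as labelled, preference-annotated graphs for every $t < n/2$. This is straightforward because the two paths have identical local structure (a path fragment of length $2t$ with the same preferences at every interior node), so I can simply copy the same identifiers in a window of radius $n/2$ around $v$ and use arbitrary distinct identifiers elsewhere. Then no deterministic LOCAL algorithm running in fewer than $n/2$ rounds can produce different outputs at $v$ on the two instances, yet a correct stable matching algorithm must, giving the $\Omega(n)$ lower bound.

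The main obstacle is verifying that the stable matchings on $P_1$ and $P_2$ really are unique and really do disagree at $v$. Uniqueness follows because with strictly ordered preferences on both sides the deferred acceptance algorithm returns the unique client-optimal stable matching, and by a simple exchange argument any other matching admits a blocking pair; the disagreement at $v$ follows because shifting which bipartition class occupies each endpoint shifts the alternation globally by one position. Once these two facts are in hand, the indistinguishability step is routine.
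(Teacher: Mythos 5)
Your overall strategy (two path instances with unique, edge-disjoint stable matchings that are locally indistinguishable far from where they differ) is exactly the argument the paper sketches for this cited result. But your specific construction of the two instances has a genuine flaw. On a path whose vertices alternate between clients and providers, changing which side occupies the left endpoint flips the side of \emph{every} vertex: if $P_1$ is $C,P,C,P,\dots$ then $P_2$ is $P,C,P,C,\dots$ on the same vertex positions. In particular the middle vertex $v$ is a client in $P_1$ and a provider in $P_2$, and its announced preference also changes (``prefer right'' as a client versus ``prefer left'' as a provider). A node's side and its own preference list are part of its input at round $0$, so the two instances are distinguishable at $v$ before any communication happens; copying identifiers in a window around $v$ does not repair this, and the indistinguishability step collapses. (A shift-by-one identification of $P_2$ with $P_1$ does not rescue it either: under that identification both unique matchings assign the identified node to the neighbour on the same side, so the outputs no longer disagree.)

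The fix, which is what Flor\'een et al.\ and the paper's Figure~2 actually do, is to keep the graph, the bipartition, the identifiers, and all preferences identical in both instances and change only the preference of a single agent adjacent to one endpoint. With your ``interior clients prefer right, interior providers prefer left'' preferences, whether the forced first edge is $\{v_1,v_2\}$ or $\{v_2,v_3\}$ is controlled by that one agent's preference, and the forcing argument then propagates down the whole path, making the two unique stable matchings edge-disjoint. Now the node at the far end (or the middle) genuinely has identical $t$-balls for $t$ less than its distance to the modified agent, yet must output a different match, which gives $\Omega(n)$. One smaller point: justifying uniqueness by ``deferred acceptance returns the unique client-optimal stable matching'' is not a uniqueness proof (client-optimality does not preclude other stable matchings); for these path instances you should instead argue uniqueness directly by the greedy forcing of blocking pairs from the modified end.
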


\begin{figure}[ht]
  \centering
  \includegraphics[width=0.7\textwidth]{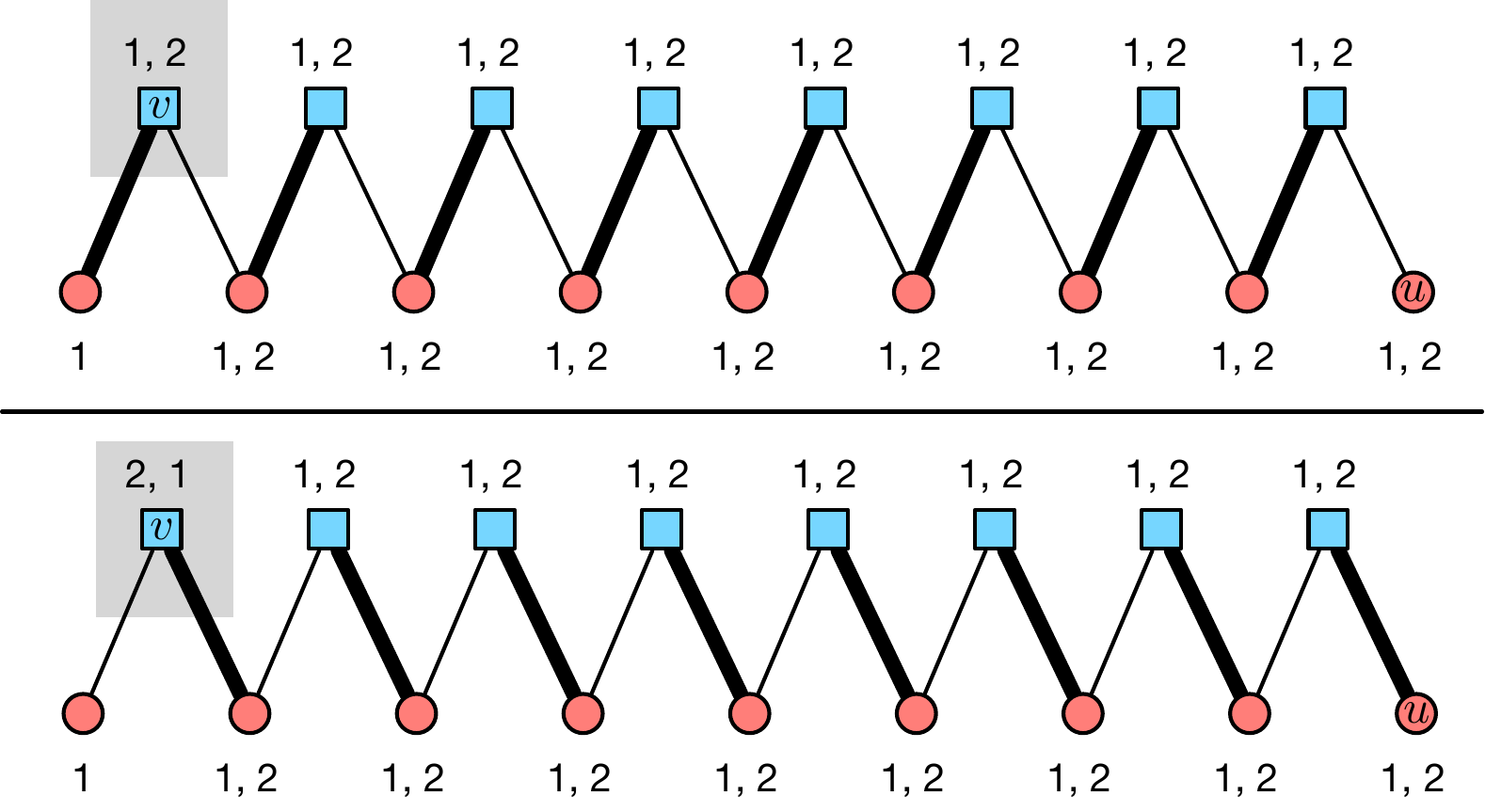}
  \caption{Two instances of stable matching that only differ in the preferences of agent $v$ on the left. This causes the unique stable matching to change, and therefore agent $u$ on the right has to learn the preferences of $v$ in order to compute a stable matching.} \label{fig:lb}
\end{figure}

The theorem follows from a simple argument: for any path, there exist two assignments of preferences to the agents such that 1) they only differ at one agent at one end of the path and 2) the resulting instances both have a unique stable matching and these matchings are edge-disjoint. See Figure~\ref{fig:lb} for an illustration.

To circumvent this hardness result, we study a special case of the stable matching problem where the providers have \emph{common preferences}: they always agree when comparing two clients. We assume that the preferences of the providers can be expressed as a single \emph{score function} $s \colon C \rightarrow \{1,\dots,S\}$. We call the value $s(v)$ of agent $v$ its \emph{preference class}. 

If the preferences are represented in a more fine-grained manner, for example as real numbers from an interval~\cite{burdett97marriage}, then by discretising them into $S$ intervals it is possible to achieve a trade-off between solution quality and running time. While explicitly knowing the possible range of values of $s$ is not important, it is important to know an upper bound on the number of distinct preference classes imposed by $s$ in order for the algorithm to terminate.

\paragraph{Variants of deferred acceptance.} It is typically assumed that the preferences of the agents are strict. However, the deferred acceptance algorithm generalises naturally to the case where there are ties by using a suitable tie-breaking mechanism. It is important to note that, in order to maintain incentive-compatibility, the procedure to break ties should not depend on the preferences of the agents. For example, a tie-breaking rule preferring the existing match in case of a tie could incentive agents to lie in order to obtain a match early.

The deferred acceptance algorithm also generalises to one-sided one-to-many matching. For example, in the case of school choice, schools have multiple seats and can thus accept multiple applicants. A simple reduction represents a provider with $k$ seats as $k$ identical copies of a provider with unit capacity. It is also possible to modify the algorithm itself so that the providers accept $k$ best matches. This has the advantage of not increasing the maximum degree of the graph.

However, the deferred acceptance algorithm does not generalise to the case of many-to-many matching, for example clients having multiple jobs that need to be assigned to providers for completion. In this case it is not clear what the utility function of a client is: for example, how does it prefer match quality compared to match size? We show in Section~\ref{sec:generalisation} that in the case of common preferences, there is a natural generalisation that can be solved efficiently.

\paragraph{Almost stable matching.} Several existing algorithms solve the almost stable matching problem~\cite{floreen10stable, ostrovsky2015fast} in the distributed setting. An edge is unstable if it is not in the matching but both of its endpoints prefer each other to their current matches. The goal is to find a matching $M$ with at most $\varepsilon|M|$ unstable edges.

It is important to note that previous efficient algorithms for almost stable matching are not incentive-compatible. For example, truncating the deferred acceptance algorithm leads to an almost stable matching~\cite{floreen10stable}. This algorithm, however, is not incentive-compatible. As an example, it could be that in the last round, agent $v$ becomes unmatched from $u$ when it accepts a better proposal. Since this was the last round, $v$ will remain unmatched. Now assume it had a neighbor $u'$ that also is unmatched: by lying that $u' >_v u$ agent $v$ could have ensured that it is matched with $u'$.

\section{Fast and fair deferred acceptance algorithm} \label{sec:ff-da}

In this section, we show that the deferred acceptance algorithm has a fast distributed implementation when one side has common preferences.

\subsection{Our basic algorithm} \label{ssec:basic-alg}

Since we assume that the clients are scored with a bounded number of values, we may have ties among the clients connected to any provider. Since the original deferred acceptance algorithm assumes that the preferences are distinct, these ties must be broken in a way that preserves the truthfulness of the mechanism and does not create long dependencies that would slow down the algorithm. We do this by computing a coloring of the virtual conflict network formed by clients that share a preference class and a provider. These colors are used to break ties: smaller color is preferred over a larger color if the preference classes are equal. Since the tie breaking rule is precomputed, clients cannot affect it by lying about their preferences, and the mechanism remains truthful. We note that tie-breaking can affect the quality of the solution, in particular when the preference lists are not complete. We do not attempt to optimise this.

The algorithm consists of three phases:
\begin{enumerate}[noitemsep, label=\arabic*)]
  \item Given the matching graph $G = (C \cup P, E)$, construct the virtual conflict network $H = (C, E')$, with the set of clients $C$ as nodes, where two clients $v$ and $u$ are connected by an edge $\{v,u\} \in E'$ if and only if $s(v) = s(u)$ and there exists $w \in P$ such that $\{v,w\}, \{u,w\} \in E$. \label{step:form-conflict}
  \item Compute a $c$-coloring $\phi$ of $H$ for some $c$. Create a new score function $s \circ \phi$ defined as $(s \circ \phi)(v) = c(s(v)-1)+\phi(v)$. \label{step:coloring}
  \item Run the distributed deferred acceptance algorithm with the provider-side common preferences given by the score function $s \circ \phi$. \label{step:da}
\end{enumerate} 

In step~\ref{step:coloring}, any coloring that does not depend on the preferences of the clients preserves truthfulness. In Section~\ref{sec:fairness} we discuss how the coloring can be done in a \emph{fair} way. Without specifying the coloring algorithm, we get the following Lemma.

\begin{lemma} \label{lem:mechanism-main}
  There exists a distributed mechanism in the CONGEST model for stable matching when the providers have common preferences. The mechanism is incentive-compatible for the clients and the algorithm implementing it runs in $O(T_c(n,\Delta_H)+Sc)$ rounds, where $T_c(n,\Delta)$ is the running time of a distributed $c$-coloring algorithm on networks of size $n$ and maximum degree $\Delta$.
\end{lemma}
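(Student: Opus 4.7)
The plan is to verify the three phases of the algorithm sequentially and then assemble them into a proof of correctness, incentive compatibility, and the claimed running time.

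For Phase 1, the construction of $H$ is purely local. Because the communication network $N$ already contains direct edges between any two clients sharing a provider, each client can exchange its preference class $s(v)$ with all of its co-client neighbours in a single round and mark an edge of $H$ exactly when the classes agree. This takes $O(1)$ rounds with $O(\log n)$-bit messages. For Phase 2, I would invoke any $c$-coloring algorithm on $H$ as a black box, paying $T_c(n,\Delta_H)$ rounds; the composed score $s\circ\phi$ then produces a global label in $[cS]$ with the crucial property that within every provider's neighbourhood all labels are distinct, so the provider-side preferences become strict.

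The main technical step is Phase 3: showing that the distributed deferred acceptance algorithm with these strict common preferences terminates in $O(Sc)$ rounds. My approach would be a cascade/monotonicity argument. Since each provider's tentative match can only become strictly better over time (in the common ordering), and since any chain of dispossessions at a single provider involves strictly decreasing $\sigma$-values, the longest dispossession cascade has length at most $cS$. Combined with the observation that at each provider at most $\Delta_P\le cS$ distinct $\sigma$-values can ever appear, I would then argue that once all clients with $\sigma\le k$ have stabilised, clients with $\sigma=k+1$ can only be rejected by the (now frozen) clients above them, so an amortised argument over the $cS$ score layers yields the $O(cS)$ round bound.

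Correctness and incentive compatibility then follow quickly. Because the modified preferences are a refinement of the original partial order, any pair that strictly blocked the original matching would also strictly block it under the refined preferences, contradicting the stability of standard DA on the refined instance; hence the output is weakly stable for the original. For IC, the key observation is that $\phi$ depends only on $H$, which depends only on the matching graph $G$ and the provider-side classes $s$, and is therefore completely independent of the clients' reported preferences. With the tie-breaking rule fixed before the clients act, the procedure is equivalent to client-proposing DA on a strict profile, and Dubins--Freedman / Roth immediately give truthfulness for the proposing side. All messages involved (proposals, acknowledgements, scores, colours) are $O(\log n)$-bit, so the algorithm fits in CONGEST.

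The part I expect to grind on is the $O(Sc)$ bound in Phase 3. A naive per-client rejection count gives only $O(\Delta_C)$, which is not comparable to $O(Sc)$ in general, so the argument must exploit either (i) a careful amortisation over the $cS$ score layers where the work of rejecting a score-$k$ client is charged to the advancement of the frozen score-$(\le k-1)$ subinstance, or (ii) the extra client--client edges of $N$, which let an unmatched client discover in $O(1)$ rounds which of its providers are already held by strictly better clients and thereby advance by a full layer per step. I would try the amortised layer argument first, falling back on the communication-network speedup if needed to match the stated bound.
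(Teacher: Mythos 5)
Your overall decomposition (build $H$ in $O(1)$ rounds, black-box $c$-coloring, compose $s$ with $\phi$ to get provider-locally strict preferences with $Sc$ classes, then run deferred acceptance) matches the paper, as does your incentive-compatibility argument: the tie-breaking depends only on $G$ and the providers' score function, so it is fixed before the clients' reports enter, and truthfulness of proposer-side deferred acceptance on the resulting strict instance applies. (The paper additionally proves a structural lemma---each client ends up matched to its most preferred provider among those not taken by strictly higher-priority clients---which yields a self-contained truthfulness argument, but citing Dubins--Freedman/Roth as you do is acceptable.)

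The genuine gap is exactly where you predicted: the $O(Sc)$ bound for Phase~3, and your preferred route (i), the amortised layer argument, does not work. Consider $S=2$ with a single class-$2$ client $v$ adjacent to providers $p_1,\dots,p_d$, where each $p_i$ has its own class-$1$ client whose unique neighbour is $p_i$. The conflict graph $H$ is empty (no two same-class clients share a provider), so $c=1$ and $Sc=2$, yet if $v$ proposes to one provider per iteration and waits for rejection it spends $\Theta(d)=\Theta(\Delta_C)$ iterations being turned away by providers already held by class-$1$ clients; no charging scheme over the $Sc$ layers can absorb this, since $\Delta_C$ is unbounded in terms of $Sc$. What saves the bound is your option (ii), which is the paper's actual mechanism: in every even round each provider broadcasts its current match (with its preference class) to all neighbours, so an unmatched client deletes from its list every provider already held by a strictly better client before its next proposal. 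One then proves by induction that all clients of composed class at most $i$ have stopped by round $2i$: when a class-$(i+1)$ client proposes in round $2i+1$, it targets its best surviving provider, all strictly better clients have already stopped, and no equal-class competitor shares that provider (by the coloring), so the proposal is accepted immediately and the $2Sc-1$ bound follows. Note that this pruning needs only the provider-to-client acknowledgements already present in the basic algorithm, not the extra client--client edges of $N$ (those are used for constructing and coloring $H$). You should commit to this adaptive variant rather than leaving it as a fallback.
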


The mechanism is deterministic unless the coloring algorithm is randomised. By using the best deterministic $(\Delta+1)$-coloring algorithm we get the following.

\begin{theorem} \label{thm:da-arbitrary-tb}
  There exists a deterministic distributed mechanism in the CONGEST model for stable matching when the providers have common preferences. The mechanism is incentive-compatible for the clients and the algorithm implementing it runs in $O(S\Delta_H + \log^* n)$ rounds.
\end{theorem}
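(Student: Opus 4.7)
The plan is to instantiate Lemma~\ref{lem:mechanism-main} with a concrete choice of deterministic distributed coloring algorithm. The only gap between Lemma~\ref{lem:mechanism-main} and the statement of Theorem~\ref{thm:da-arbitrary-tb} is to pick a subroutine with $T_c(n,\Delta_H) = O(\Delta_H + \log^* n)$ and a palette of size $c = O(\Delta_H)$, after which the arithmetic of the running-time bound finishes the job.

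First, I would invoke the standard deterministic $(\Delta+1)$-coloring algorithm that runs in $O(\Delta+\log^*n)$ rounds in the CONGEST model (e.g.\ the combination of degree reduction with Linial's algorithm, or the Barenboim--Elkin--Kuhn-style algorithms). This gives a $(\Delta_H+1)$-coloring $\phi$ of the virtual conflict network $H$. I set $c = \Delta_H + 1$ in Lemma~\ref{lem:mechanism-main}, so the total running time is
\[
T_c(n,\Delta_H) + Sc \;=\; O(\Delta_H + \log^* n) + O(S \Delta_H) \;=\; O(S\Delta_H + \log^* n),
\]
matching the claimed bound. Incentive-compatibility for the clients is inherited from Lemma~\ref{lem:mechanism-main}, because $\phi$ is computed entirely from the communication topology and identifiers and does not depend on any client's reported preference list.

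The only point that needs checking is that the coloring subroutine really can be executed on $H$ in the CONGEST model with messages of size $O(\log n)$ over the actual communication network $N$. By the modelling assumption in Section~\ref{sec:def}, $N$ already contains an edge between every pair of clients that share a provider, so every edge of $H$ is present in $N$ and one simulated round on $H$ costs one round on $N$. The messages exchanged during coloring consist of IDs and color candidates in $[\Delta_H+1]$, each fitting in $O(\log n)$ bits; likewise, the augmented scores $(s\circ\phi)(v) = c(s(v)-1)+\phi(v)$ take values in $[c S]$ and also fit into $O(\log n)$ bits, so the deferred acceptance phase described in Section~\ref{ssec:da} respects the CONGEST bandwidth.

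The main obstacle in this proof is really just checking these CONGEST bookkeeping details: the coloring step appears to touch a ``virtual'' graph $H$, and one has to justify that this virtualisation is free because $N$ was defined to contain it. Once this is observed, the rest of the argument is a direct substitution into Lemma~\ref{lem:mechanism-main}.
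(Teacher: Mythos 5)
Your proposal is correct and follows essentially the same route as the paper: instantiate Lemma~\ref{lem:mechanism-main} with a deterministic $(\Delta_H+1)$-coloring of the conflict graph and observe that the arithmetic gives $O(S\Delta_H+\log^* n)$ (the paper happens to use the Maus--Tonoyan $O(\sqrt{\Delta\log\Delta}+\log^* n)$ algorithm, but any $O(\Delta+\log^* n)$-round CONGEST coloring suffices since the $S\Delta_H$ term dominates). Your extra paragraph on CONGEST bookkeeping is a sound observation that the paper handles via the modelling assumption in Section~\ref{sec:def} and the discussion in Section~\ref{ssec:congest-col}.
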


Here, the maximum degree of the conflict graph is bounded by $\Delta_H \leq \Delta_C \Delta_P$.

\begin{proof}
  Using the $(\Delta+1)$-coloring algorithm of Maus and Tonoyan~\cite{maus20linial} that runs in $O(\sqrt{\Delta \log \Delta} + \log^* n)$ rounds we can color $H$ with $\Delta_H + 1$ colors in time $O(\sqrt{\Delta_H \log \Delta_H} + \log^* n)$. Using this in Lemma~\ref{lem:mechanism-main}, the algorithm then runs in $S (\Delta_H + 1)$ rounds, giving a total running time $O(S\Delta_H + \log^* n)$.
\end{proof}

We will now proceed to prove Lemma~\ref{lem:mechanism-main}.

\subsection{Proof of Lemma~\ref{lem:mechanism-main}}

If the providers have common preferences with $S$ preference classes, then the deferred acceptance algorithm runs in $O(S)$ rounds in the CONGEST model.

\begin{lemma} \label{lem:da-common-preferences}
  If providers have common preferences $s$ represented with at most $S$ preference classes and the preferences of each provider form a strict linear order over its adjacent clients, then the distributed deferred acceptance algorithm runs in $2S-1$ rounds in the CONGEST model of distributed computing.
\end{lemma}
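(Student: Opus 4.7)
The plan is to prove the $2S-1$ bound by induction on the preference class index $i$, maintaining the invariant that after round $2i-1$ every client of preference class at most $i$ has reached its final state, i.e.\ is either permanently matched or has emptied its preference list. Setting $i = S$ then gives termination of the algorithm in $2S-1$ rounds.

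For the base case $i = 1$, every class-$1$ client $v$ proposes to its top-ranked neighbor $p$ in round $1$. Because the providers' preferences are strict linear orders built from a common score function, $p$ has at most one class-$1$ neighbor, namely $v$; every other proposer to $p$ has class $\geq 2$ and is therefore less preferred. Hence $p$ accepts $v$, and since no client has class smaller than $1$, $v$ cannot be displaced thereafter. For the inductive step, assuming the invariant for $i-1$, the matches of all class-${<}i$ clients are frozen from round $2i-3$ onward. A class-$i$ client $v$ that is not already settled proposes in round $2i-1$ to its top remaining provider $p$. Strictness of $p$'s preferences means $p$ has at most one class-$i$ neighbor, so $v$ never loses to another class-$i$ contender, and any future displacement would require a client of class $<i$, which is impossible because those are all frozen. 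Hence the outcome of $v$'s proposal in round $2i-1$ is irreversible.

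The main obstacle I expect lies in the sub-case where $v$'s proposal in round $2i-1$ is rejected by a frozen class-${<}i$ occupant: in the one-proposal-per-round DA of Section~\ref{ssec:da}, $v$ would then have to wait for the next proposal round to try its next preference, eating into the $2S-1$ budget. To patch this I would either strengthen the invariant so that by round $2i-3$ each class-$i$ client's preference list has already been pruned of class-${<}i$ blockers (so that its round-$2i-1$ proposal is guaranteed to either succeed or leave its list empty), or exploit the extra client-to-client edges of the communication network $N$ to let class-$i$ clients learn the identities of their frozen blockers through piggy-backed acknowledgements in earlier rounds. Showing that this alignment of the distributed proposal cadence with the class hierarchy can actually be executed in two rounds per class is, in my view, the heart of the argument.
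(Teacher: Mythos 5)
Your induction (class-$\le i$ clients settled by round $2i-1$, two rounds per class) is exactly the paper's argument, and the obstacle you flag at the end is the one real subtlety. The resolution is your first patch, and it requires no new machinery: in every even (acknowledgement) round each provider broadcasts its current match to \emph{all} of its neighbours, not just the proposer, so by round $2i-2$ a class-$i$ client has already heard which of its neighbours are held by class-${<}i$ clients (whose matches are final by your invariant) and simply skips them --- its round-$(2i-1)$ proposal goes to the most preferred provider that has \emph{not} announced a higher-class match, and strictness guarantees acceptance; if no such provider remains, the client stops unmatched. This adaptive proposal rule is what the paper means by the deferred acceptance subroutine here (it is made explicit in the proof of Lemma~\ref{lem:da-common-structure}: ``let $p'$ be the highest priority provider adjacent to $v$ that has not announced that it is matched to a client in $\cup_{i=1}^j C(i)$''), and it is why the literal one-step-down-the-list variant you worry about is not the algorithm being analysed. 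Your second patch (routing through client-to-client edges) is unnecessary. One small detail worth stating if you write this up: for the pruning to work the client must be able to tell that an announced match has a strictly higher class than itself, so the acknowledgement should carry the match's score under $s\circ\phi$ (an $O(\log n)$-bit value, so still CONGEST-compatible).
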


\begin{proof}
  This follows from the fact that in each proposal-answer iteration, consisting of 2 communication rounds, all clients with the highest preference class remaining become matched and stop. 
  
  In the first round, the proposals of agents in preference class 1 are always accepted by the providers, as the linear order is strict. They will therefore stop proposing and will not participate in the algorithm further. In the third round, clients in the second preference class either had their proposal accepted in the first round, or have their second proposal accepted in the third round. The latter follows from the fact that all clients of (higher) preference one have already stopped. By induction, by round $2i-1$, for $i=1,2,\dots,S$, all agents with preference class at most $i-1$ have stopped. In the final round no acknowledgement is required, as by the strictness of the preferences the clients that are active before round $2S-1$ are the only such neighbours of their active adjacent providers.
\end{proof}

The score function $s \circ \phi$ constructed in step~\ref{step:coloring} of the algorithm induces a strict linear order for each provider over its adjacent clients, as required in Lemma~\ref{lem:da-common-preferences}. The total number of preference classes is $S c$.

\begin{lemma} \label{lem:linear-order-extension}
  Given a score function $s$ and a $c$-coloring $\phi$, the score function $s \circ \phi$, restricted to the neighborhood of any single provider, forms a strict linear order. The number of preference classes is $S c$.  
\end{lemma}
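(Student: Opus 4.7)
The plan is to verify the two claims of the lemma directly from the definition $(s\circ\phi)(v) = c(s(v)-1) + \phi(v)$, by a short case analysis on whether two clients adjacent to a common provider share a preference class.

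Fix a provider $w \in P$ and two distinct clients $v,u \in N(w)$. In the first case I would handle clients in different preference classes: assuming without loss of generality $s(v) < s(u)$, and using that $\phi$ takes values in $\{1,\dots,c\}$, I would bound
\[
(s\circ\phi)(v) \;\le\; c(s(v)-1) + c \;=\; c\,s(v) \;\le\; c(s(u)-1) \;<\; c(s(u)-1) + \phi(u) \;=\; (s\circ\phi)(u),
\]
so the two values are distinct. The idea here is that each preference class occupies a disjoint block of $c$ consecutive integers, so no two classes can overlap.

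In the second case $s(v) = s(u)$, I would invoke the construction of the conflict network $H$ in step~\ref{step:form-conflict}: since $v$ and $u$ share both a preference class and the common neighbour $w$, the edge $\{v,u\}$ lies in $E(H)$. As $\phi$ is a proper $c$-coloring of $H$, we get $\phi(v) \neq \phi(u)$, and hence $(s\circ\phi)(v) - (s\circ\phi)(u) = \phi(v) - \phi(u) \neq 0$. Combining the two cases, $s \circ \phi$ is injective on $N(w)$, giving a strict linear order. For the count, the image of $s\circ\phi$ lies in $\{c(i-1)+j : 1\le i\le S,\, 1\le j\le c\} = \{1,2,\dots,Sc\}$, so the number of preference classes under the new score function is at most (and in general exactly) $Sc$.

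There is no real obstacle here; the only subtlety is to make explicit that the conflict network $H$ is designed precisely so that ties within a preference class at a common provider become edges, which is what lets the coloring $\phi$ separate them. Once this observation is isolated, the proof reduces to the arithmetic bound above.
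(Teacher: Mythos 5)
Your proof is correct and follows essentially the same approach as the paper: distinguish the cases $s(v)\neq s(u)$ and $s(v)=s(u)$, using the block structure of $c(s(v)-1)+\phi(v)$ for the former and the properness of the conflict-graph coloring for the latter. You merely spell out the arithmetic for the first case more explicitly (also showing the order is preserved, not just that values are distinct), which the paper summarises as ``by construction.''
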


\begin{proof}
  By construction, the new score function assigns values between 1 and $Sc$. If two clients $v$ and $u$ have $s(v) \neq s(u)$, then by construction they also have $(s \circ \phi)(v) \neq (s \circ \phi)(u)$. Finally, if they have $s(v) = s(u)$ and share a neighbour, then $\{u,v\}$ is in the conflict graph and $\phi(v) \neq \phi(u)$ and therefore $(s \circ \phi)(v) \neq (s \circ \phi)(u)$.
\end{proof}

We also want the algorithm to be a mechanism, i.e.\ revealing the true preferences should be a dominant strategy. To prove this in the special case of common preferences for providers, we use the following technical lemma that states that the match of a client is its highest preference after the matches of clients with a higher preference have been removed.

\begin{lemma} \label{lem:da-common-structure}
  Assume that the providers have common preferences $s$ such that the preferences of each provider form a strict linear order over its adjacent clients. Let $C(i)$ denote the set of clients $v$ with $s(v) = i$, and let $M(i)$ denote the matching obtained by running the deferred acceptance algorithm in the subgraph induced by $\cup_{j=1}^i C(i) \cup P$. For all $i$ and each $v \in C(i)$ it holds that the match of $v$ in $G = (C \cup P, E)$ under the deferred acceptance algorithm is its most preferred match in the subgraph of $G$ induced by $(C \cup P) \setminus M(i-1)$. 
\end{lemma}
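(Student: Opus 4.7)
The plan is to prove the lemma by strong induction on $i$, leveraging the key structural property that providers' common preferences make the deferred acceptance dynamics \emph{separable by preference class}: no client of class $j$ can displace a client of class $i < j$ from any provider, since every provider strictly prefers the lower-indexed preference class. Consequently the matching that classes $1,\dots,i{-}1$ form in the full DA run on $G$ is unaffected by the presence of classes $\ge i$, and it equals $M(i-1)$ by definition of $M$.

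For the base case $i=1$, we have $M(0)=\emptyset$, so the subgraph is all of $G$. I would argue that at every provider $p$, the sequence of matches held by $p$ during DA is monotonically non-decreasing in $p$'s preference order. Thus when a class-$1$ client proposes to $p$, the proposal is accepted (possibly displacing a client of class $\ge 2$), and the subsequent internal class-$1$ competition at $p$ proceeds exactly as it would if classes $\ge 2$ were absent. Hence the class-$1$ matches in DA on $G$ coincide with those in DA on $C(1)\cup P$, which is the claim for $i=1$.

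For the inductive step, assume the result for all $j<i$; in particular, the matches of classes $1,\dots,i-1$ in DA on $G$ form exactly $M(i-1)$. By Lemma~\ref{lem:da-common-preferences}, all clients of class $\le i-1$ have halted by round $2(i-1)-1$, and the providers matched to them never accept another proposal of class $\ge i$ (again by common preferences). So, from round $2i-1$ onward, the set of providers effectively available to class-$i$ clients is exactly $P\setminus M(i-1)$, and class-$\ge i+1$ clients cannot interfere with class-$i$ decisions by the same separability principle applied one level deeper. Therefore the execution of DA restricted to class-$i$ clients running in $G$ is indistinguishable, round by round, from the execution of DA restricted to class-$i$ clients in the subgraph induced by $(C\cup P)\setminus M(i-1)$, giving each $v\in C(i)$ the same terminal match in both instances, namely its most preferred match in that subgraph.

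The main obstacle is formalising the monotonicity/separability fact crisply: specifically, one must show that for every provider $p$ and every time step $t$, if $p$ is matched to a client of class $k$ at time $t$, then $p$ will only ever switch to a client of class $\le k$ afterwards, and moreover that any proposal from a strictly worse class is rejected without side effects. Once this is nailed down, the two parts of the inductive step — the irrelevance of lower classes for class $i$, and the fact that higher-class matches equal $M(i-1)$ — follow by a direct simulation argument matching the rounds of DA on $G$ with those of DA on the subgraph.
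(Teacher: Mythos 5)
Your plan follows essentially the same route as the paper's proof: induction on the preference class, resting on the monotonicity of each provider's match (the paper's fact (P2)) and on the fact that all clients of class at most $j$ have stopped by round $2j$ (the paper's fact (P3), inherited from Lemma~\ref{lem:da-common-preferences}), which together give the ``separability'' you describe. The one step that does not close as written is the very last one: from \emph{the class-$i$ sub-execution on $G$ is indistinguishable from DA on the subgraph induced by $(C \cup P) \setminus M(i-1)$} you jump to the terminal match being ``namely its most preferred match in that subgraph.'' That identification is the actual content of the lemma, and it relies on the hypothesis that each provider's preferences form a \emph{strict} linear order over its adjacent clients, so that no two clients of the same class share a provider; only then is every class-$i$ client the unique top remaining neighbour of each provider it approaches, so that its first proposal in the reduced instance is accepted and it lands on its most preferred surviving provider. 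You never invoke strictness, and your mention of ``internal class-1 competition at $p$'' suggests you are allowing two same-class clients at a common provider --- in that situation the losing client would \emph{not} receive its most preferred remaining match and the stated conclusion would be false. The paper closes this step concretely: at round $2j+1$, an unmatched $v \in C(j+1)$ proposes to the highest-priority adjacent provider $p'$ that has not announced a match with a client of class at most $j$; since all higher classes have stopped and, by strictness, $p'$ has no other class-$(j+1)$ neighbour, this proposal is accepted. Adding that observation (and noting that clients of classes below $i$ left unmatched by $M(i-1)$ have no neighbours in the reduced subgraph, so they cannot interfere) turns your sketch into the paper's argument.
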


\begin{proof}
  We note three facts about the deferred acceptance algorithm. First, each client can only become matched with its $k$th preferred match if each of the $k-1$ more preferred providers 1) have already been proposed to by $v$, and those proposals were rejected, 2) have unmatched $v$ due to a later more preferable proposal, or 3) have announced a match with a client with a better priority (P1). Second, each provider only accepts proposals from more preferred clients and therefore its match can only improve during the algorithm (P2). Third, as shown in the proof of Lemma~\ref{lem:da-common-preferences}, all clients in $C(1), \dots, C(j)$ have stopped after round $2j$ (P3).
  
  The proof is by induction on $i$. For the base case $i=1$, the first proposal of each $v \in C(1)$ is accepted. Therefore $M(1)$ consists of the most preferred match of each client. Now assume the claim holds up to some $j$. Consider $v \in C(j+1)$ and round $2j+1$. If $v$ is matched to some provider $p$, all higher priority providers are matched to a higher priority client by P1 and P2. By P3, $v$ will remain matched to $p$ until the end of the execution. If $v$ is unmatched, let $p'$ be the highest priority provider adjacent to $v$ that has not announced that it is matched to a client in $\cup_{i=1}^j C(i)$. Since $v$ is unmatched, it has not yet proposed to $p'$ (P2). It will now propose to it, and, since higher priority clients have stopped, become matched. If no $p'$ exists, then by (P1) and (P2) all neighbours of $v$ are matched to a more preferred client.
\end{proof}

We can now show Lemma~\ref{lem:mechanism-main}.

\begin{proof}[Proof of Lemma~\ref{lem:mechanism-main}]
  By Lemma~\ref{lem:linear-order-extension}, the score function $s \circ \phi$ gives a partial linear order for the providers with $Sc$ classes such that the order is strict for any provider. By Lemma~\ref{lem:da-common-preferences} the distributed deferred acceptance algorithm then runs in $2Sc-1$ rounds. Constructing the virtual conflict graph requires 1 round of communication. The coloring algorithm, by definition, takes $O(T_c(n, \Delta_H))$ rounds. In total the running time is $O(T_c(n,\Delta)+Sc)$ rounds.
  
  The algorithm computes a stable matching by the combination of Lemmas~\ref{lem:da-common-preferences} and \ref{lem:linear-order-extension}. It remains to argue that the algorithm is incentive-compatible for the clients. The first step, where the conflict graph $H$ is formed, depends on the preferences of the providers, but not on the preferences of the clients. Therefore clients cannot affect the structure of $H$ by lying. In the second step, the coloring algorithm only depends on the structure of $H$ and (possibly) the unique identifiers of the clients. Since we assume that these are not part of the private input, the clients cannot affect the coloring by lying. It is known that the final step, the original deferred acceptance algorithm is incentive compatible for the clients~\cite{dubins81machiavelli,roth82economics}. In the case of common preferences we have a simpler argument for this. By Lemma~\ref{lem:da-common-structure}, the match of each client is its most preferred provider once the matches of higher priority clients have been removed. Conversely, since the match of the higher priority clients is not affected by the preferences of lower priority clients, the agents should always report their true preferences so that they get matched to their most preferred remaining provider.
\end{proof}

\subsection{Coloring in the CONGEST model} \label{ssec:congest-col}

In Section~\ref{sec:def} we assumed that the the communication network has edges between clients that are connected to the same provider (specifically we need edges between clients who are also in the same preference class under $s$). Removing this assumption has two effects: simulating the coloring algorithm in the conflict graph slows down by a factor of 2, as the messages have to be relayed through the providers, and providers form bottle necks since, in the worst case, all connected clients want to send a message to each other connected client.

First, it should be noted that in this case it is impossible to form the explicit conflict graph in constant time. However, since we need $\Omega(\Delta_H)$ time to run the deferred acceptance algorithm, we can spend $O(\Delta_P) = O(\Delta_H)$ rounds to do this.

Most suitable coloring algorithm depends on the maximum degree of the conflict graph. When $\Delta_H = \Omega(\log n)$, we can use a simple variant of Luby's coloring algorithm to color with $2\Delta$ colors. In each round, each uncolored node tries to pick a free color uniformly at random. Since there are $2\Delta$ possible colors, probability to choose a color that is not picked by others is at least 1/2. Providers can be used to verify whether a trial was successful: each client sends its color to each provider and each provider indicates whether a neighbor in the conflict graph chose the same color. If no provider indicates a conflict, the color can be chosen safely. This algororithm runs in $O(\log n)$ rounds with high probability. We note that greedy color reduction does not work in this setting: simulating one round requires sending $O(\Delta_P)$ colors.

For fast $(\Delta+1)$-coloring, we can use the additive-group coloring algorithm of Barenboim and Elkin~\cite{BarenboimEG22} combined with the $O(\Delta^2)$-coloring algorithm of Linial~\cite{Linial1992}. For sufficiently small values of $\Delta_H = O(\log / \log \log n)$ there is no overhead (as again, we must spend $\Omega(\Delta_H)$ rounds to compute the stable matching). For $\Delta_H$ larger than this, there is a slowdown in the algorithm, at least without further optimisation. 

\section{Generalisation to fractional loads and capacities} \label{sec:generalisation}


In this section we show that our algorithm can be generalised to the case where each client $v$ has some load $\ell(v)$ and each provider $u$ has some capacity $\ell(u)$, these have fractional values, and the goal is to find a stable matching such that for each provider, the total load assigned to it is at most its capacity. Note that this also captures the setting where clients have discrete load that is matched to the providers.

More formally, we call the following problem the \emph{fractional capacitated stable matching} problem. The output is a fractional matching $m \colon E \rightarrow \mathbb{Q}$ that assigns a rational number to each edge. For agent $v$, let $m(v) = \sum_{e: v \in e} m(e)$. We have a capacity constraint $m(v) \leq \ell(v)$ for all agents. We say that agents $v \in C$, $u \in P$: $\{v,u\} \in E$ form a \emph{blocking pair} if they would prefer to assign more load to the matching between them. More formally, if 1) there exist $v' \in N(u)$ such that $m(\{v',u\}) > 0$ and $u >_v u'$, or $m(u) < \ell(u)$ and 2) there exists $u' \in N(v)$ such that $m(\{v,u'\}) > 0$ and $v >_u v'$ or $m(v) < \ell(v)$.

With general preferences, it is not clear how utility should be defined: how does an agent prefer $x$ units of load matched at preference $y$ against $2x$ units of load matched at preference $y+1$. A typical solution is to define a new utility function that fixes some relation between these two. Here we will show that the natural greedy modification of the deferred acceptance algorithm optimises both when one side has common preferences: the clients will maximise both the quality and quantity of their their matching by being truthful.

\subsection{Modified deferred acceptance algorithm}

To account for capacities (and fractional loads), we modify the original deferred acceptance algorithm. We call this simple modification \emph{batch proposals}. First, in each acknowledgement round, all providers send their current loads with the associated preference scores to all of their neighbours. Second, in each proposal round, each client proposes greedily based on the available capacity of its neighbours.

We use the basic algorithm template given in Section~\ref{ssec:basic-alg}, but use the following, modified subroutine for deferred acceptance. For simplicity, we present the non-adaptive version of this algorithm. It has the same worst-case running time as the adaptive version presented in Section~\ref{ssec:basic-alg}, but has slightly better message complexity.

Each client $v$ holds a vector $r(v) = (r_1(v), r_2(v), \dots, r_d(v))$, consisting of the remaining capacities of its neighboring providers and where $d = \deg(v)$. On round 0, these are set to the initial capacities of the neighbours. For a given vector $r(v)$, the greedy batch proposal of $v$ to the $i$th preferred provider is calculated as $p(v,i) = \min \{ r_i(v), l(v) - \sum_{j=1}^{i-1} p(v, j) \}$. That is, go over the providers one by one in preference order and assign the maximum amount of load that is feasible.

On odd rounds $2t-1$ for $t=1,\dots, S$:
\begin{enumerate}[noitemsep]
  \item For a client, if $s(v) = t$, for each neighbor $u_i$ send them the batch proposal $p(v,i)$ and stop with $m(\{v,u_i\}) = p(v,i)$.
  \item If a provider receives a non-zero proposal, accept it.
\end{enumerate}

On even rounds $2t$ for $t=1,\dots, S$:
\begin{enumerate}[noitemsep]
  \item Each provider, if it accepted a non-zero proposal in the previous round, sends its new remaining capacity to all neighbours.
  \item Each client $v$ updates $r(v)$ based on received messages.
\end{enumerate}

Using this subroutine, we show the following lemma.

\begin{lemma} \label{lem:batch-da}
    There exists a distributed mechanism in the CONGEST model for fractional capacitated stable matching when the providers have common preferences. The mechanism is incentive-compatible for the clients and the algorithm implementing it runs in $O(T_c(n,\Delta_H)+Sc)$ rounds.
\end{lemma}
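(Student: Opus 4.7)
The plan is to mirror the three-phase architecture used to prove Lemma~\ref{lem:mechanism-main}: form the virtual conflict graph $H$, compute a $c$-coloring $\phi$ of $H$ in $T_c(n,\Delta_H)$ rounds, and use the refined score function $s\circ\phi$, which by Lemma~\ref{lem:linear-order-extension} induces a strict linear order at every provider and has $Sc$ preference classes. The only change is that phase~\ref{step:da} is replaced by the batch-proposal subroutine of this section, iterated over the $Sc$ refined classes rather than the original $S$. Crucially, after refinement no two clients sharing a provider belong to the same preference class, so every round involves at most one proposal per provider and the ``accept any non-zero proposal'' rule is unambiguous. The running time then decomposes as one round to build $H$, $T_c(n,\Delta_H)$ rounds to color it, and $2Sc$ rounds for the batch DA (one proposal round and one acknowledgement round per refined class), matching the claimed $O(T_c(n,\Delta_H)+Sc)$ bound. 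Phases~\ref{step:form-conflict} and \ref{step:coloring} depend only on providers' scores, the network, and identifiers, exactly as in Section~\ref{ssec:basic-alg}, so clients cannot influence them by misreporting.

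For feasibility and stability I would proceed as follows. Feasibility is immediate: every batch proposal is clipped to $\ell(v)$ and every provider only advertises and accepts within its current residual, so the constraints $m(v)\le\ell(v)$ and $m(u)\le\ell(u)$ hold throughout. For the absence of blocking pairs I would argue by contradiction, using the structural analogue of Lemma~\ref{lem:da-common-structure}: because each refined class acts and stops in a single round, the residual vector $r(v)$ seen by a client $v$ when it acts is determined entirely by the consumption of strictly higher-priority clients. Suppose now that $(v,u)$ with $\{v,u\}\in E$ is a blocking pair. If either $m(v)<\ell(v)$ or there is $u'$ with $m(\{v,u'\})>0$ and $u>_v u'$, the greedy formula $p(v,i)=\min\{r_i(v),\ell(v)-\sum_{j<i}p(v,j)\}$ forces $r_u(v)=0$ at the moment $v$ acted, i.e., $u$ had already been saturated by clients of strictly higher refined preference. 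But then $m(u)=\ell(u)$ and every $v'$ with $m(\{v',u\})>0$ satisfies $v'>_u v$, contradicting the provider-side half of the blocking condition.

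The main obstacle is the incentive-compatibility argument for the batch step, because the utility associated with a fractional match is not uniquely defined by the problem statement. My plan is to establish a stronger, preference-independent domination: truthful reporting maximises the load placed on every prefix of the client's true preference list, while leaving the total matched mass unchanged. The first ingredient is that $r(v)$ at $v$'s turn is invariant under $v$'s own report, since it is fixed by the choices of the earlier, higher-priority clients. Given this $r(v)$, the batch formula assigns total mass $\min\{\ell(v),\sum_i r_i(v)\}$ regardless of the reported order, so quantity is invariant. The second ingredient is the key inequality: under truthful reporting, the load on $v$'s true top-$k$ providers equals $\min\{\ell(v),\sum_{j\le k}r_j(v)\}$, and this is a trivial upper bound on the load any other report can place on those same $k$ providers. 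Hence every utility monotone in both total matched mass and in mass assigned to more preferred providers is weakly maximised by truth-telling, formalising the paper's claim that truth optimises both quality and quantity. Combined with the report-independence of phases~\ref{step:form-conflict}--\ref{step:coloring}, this completes the incentive-compatibility argument.
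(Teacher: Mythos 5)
Your proposal is correct and follows essentially the same route as the paper: the three-phase template of Lemma~\ref{lem:mechanism-main} with the batch-proposal subroutine over the $Sc$ refined classes, and incentive-compatibility via prefix-sum domination, which is exactly the paper's Lemma~\ref{lem:batch-da-monotone} (your closed form $\min\{\ell(v),\sum_{j\le k}r_j(v)\}$ is a slightly cleaner version of the paper's case analysis, and your explicit blocking-pair argument is more detailed than the paper's one-sentence stability claim). One minor imprecision: in the stability step it need not be that $r_u(v)=0$ when $v$ acts --- $v$ may itself saturate $u$ --- but either way $m(u)=\ell(u)$ with all positive load coming from clients weakly preferred to $v$, so the contradiction still goes through.
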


Analogous to Theorem~\ref{thm:da-arbitrary-tb}, applying a graph coloring algorithm yields the following deterministic algorithm.

\begin{theorem} \label{thm:fractional-det-da}
  There exists a deterministic distributed mechanism in the CONGEST model for fractional capacitated stable matching when the providers have common preferences. The mechanism is incentive-compatible for the clients and the algorithm implementing it runs in $O(S\Delta_H + \log^* n)$ rounds.
\end{theorem}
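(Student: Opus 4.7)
The plan is to mirror the proof of Theorem~\ref{thm:da-arbitrary-tb}: instantiate Lemma~\ref{lem:batch-da} with a fast deterministic $(\Delta+1)$-coloring subroutine applied to the virtual conflict graph $H$. Concretely, I would invoke the $(\Delta+1)$-coloring algorithm of Maus and Tonoyan~\cite{maus20linial}, which runs in $T_c(n,\Delta_H) = O(\sqrt{\Delta_H \log \Delta_H} + \log^* n)$ rounds on $H$. This supplies a coloring with $c = \Delta_H + 1$ colors, and by the construction in step~\ref{step:coloring} of the basic algorithm template in Section~\ref{ssec:basic-alg}, the refined score function $s \circ \phi$ induces a strict linear order on the neighbourhood of every provider, which is precisely what the batch-proposal deferred acceptance subroutine requires.

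Plugging these quantities into the running time bound guaranteed by Lemma~\ref{lem:batch-da} gives a total cost of $O(T_c(n,\Delta_H) + Sc) = O(\sqrt{\Delta_H \log \Delta_H} + \log^* n + S(\Delta_H+1))$. Since $\sqrt{\Delta_H \log \Delta_H} = O(\Delta_H) = O(S\Delta_H)$ for $S \geq 1$, this collapses to the claimed $O(S\Delta_H + \log^* n)$, matching the statement of the theorem.

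Incentive-compatibility transfers directly from Lemma~\ref{lem:batch-da}: forming the conflict graph depends only on provider preferences and the communication graph, the Maus--Tonoyan coloring is deterministic and depends only on the structure of $H$ together with the unique identifiers, and the batch-proposal deferred acceptance is itself shown to be truthful inside the proof of Lemma~\ref{lem:batch-da}. Because no stage of the pipeline consults client-reported preferences until they enter a routine that is already known to be truthful, no client can benefit from misreporting.

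I do not anticipate any serious obstacle beyond routine bookkeeping, since the theorem is the fractional analogue of Theorem~\ref{thm:da-arbitrary-tb} and the substantive work has already been absorbed into Lemma~\ref{lem:batch-da}. The only mild subtlety worth double-checking is that the acknowledgement messages in the batch variant, which now carry remaining rational capacities rather than a single match identifier, still fit into $O(\log n)$ bits under a standard encoding, so that the entire construction stays inside the CONGEST model without any additional factor in the stated round complexity.
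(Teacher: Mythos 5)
Your proposal matches the paper's (implicit) argument exactly: the paper derives Theorem~\ref{thm:fractional-det-da} from Lemma~\ref{lem:batch-da} by plugging in the Maus--Tonoyan $(\Delta+1)$-coloring, just as you do, and your closing remark about encoding rational capacities in $O(\log n)$ bits is precisely the assumption the paper states at the end of Section~\ref{sec:generalisation}. Nothing to add.
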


While all fractional capacitated matchings are not are not comparable based on which an agent would prefer, we show that in the case of common preferences, we can define a natural partial order over fractional matchings the algorithm can compute such that agents will always prefer the matching produced when the agent is truthful.
This implies that in the case of common preferences the deferred acceptance algorithm is incentive-compatible for the clients.

For a fixed client $v$, let $m(v,i)$ denote the load assigned by fractional matching $m$ on the edge between $v$ and its $i$th most preferred provider. We define that $m$ is preferable to $m'$ for client $v$ if the following holds: for all $i \in \{1,\dots,\deg(v)\}$, we have that $\sum_{j=1}^i m(v,j) \geq \sum_{j=1}^i m'(v,j)$. That is, assignment $m$ always assigns at least as much load at at least as high preference as $m'$. 

To prove the truthfulness of the modified algorithm, we show that under the modified deferred acceptance algorith, the solution computed under true preferences is preferable to the solution computed under any other reported preferences.

\begin{lemma} \label{lem:batch-da-monotone}
	For each client $v$ and each $i$ in $\{ 1,\dots,\deg(v) \}$, the modified deferred acceptance algorithm computes an assignment such that reporting the true preferences maximises $\sum_{j=1}^i m(v,j)$.
\end{lemma}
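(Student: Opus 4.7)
The plan is to reduce the lemma to two independent claims: first, that the remaining-capacity vector $r(v)$ present when client $v$ issues its batch proposal is a function only of the reports of clients other than $v$; and second, that for any fixed vector $r(v)$, the greedy batch rule run on $v$'s true preferences lexicographically dominates the outcome of the same rule run on any permuted preferences. Combining the two, if all other clients' reports are held fixed, then for every permutation that $v$ might report the cumulative prefix sum $\sum_{j=1}^i m(v,j)$ is at most the value attained under truthful reporting, which is exactly the statement of the lemma.

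For the first claim, I would induct on the refined preference class $\sigma = s \circ \phi$ that actually drives the subroutine in step~\ref{step:da} of the template. Let $t = \sigma(v)$. The capacities announced by neighbours of $v$ by the start of round $2t-1$ are determined entirely by the batch proposals issued by clients $u$ with $\sigma(u) < t$, and the induction hypothesis says each such $u$'s proposal depends only on $u$'s own report and on capacities already independent of $v$. Moreover the coloring built in step~\ref{step:coloring} guarantees that no other client of class $\sigma = t$ shares a provider with $v$, so no simultaneous round-$(2t-1)$ proposal can alter $r(v)$ before $v$'s own transmission is delivered. Consequently, $r(v)$ at round $2t-1$ is identical under every report $v$ might submit.

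For the second claim, enumerate the neighbours of $v$ in $v$'s true preference order with capacities $r_1,\dots,r_d$. A short induction on $i$ yields, for the greedy batch values,
\[
  \sum_{j=1}^i p(v,j) \;=\; \min\!\Bigl\{\ell(v),\ \sum_{j=1}^i r_j\Bigr\}.
\]
Any alternative report produces some assignment $q_1,\dots,q_d$ which, re-indexed into the true preference order, still satisfies the per-provider bound $q_j \le r_j$ and the total load bound $\sum_j q_j \le \ell(v)$. Summing the first $i$ entries gives $\sum_{j=1}^i q_j \le \min\{\ell(v),\sum_{j=1}^i r_j\}$, matching the truthful cumulative sum from above.

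The most delicate point I expect is the induction for the first claim: the independence of $r(v)$ from $v$'s report hinges on the coloring step separating ties, so that no two clients in the same $\sigma$-class share a provider. Without this, a misreport by $v$ could race another class-$t$ client on a shared provider, shift the order of acceptances, and thereby alter the capacity announcement $v$ sees, breaking both the clean reduction and the lexicographic dominance. Verifying that step~\ref{step:coloring} of the template provides exactly this separation (which it does by the definition of the virtual conflict graph) is what makes the rest of the argument go through cleanly.
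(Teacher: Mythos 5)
Your proposal is correct and follows essentially the same route as the paper: first argue that $v$'s reported order cannot influence its refined class $s\circ\phi$ or the remaining capacities $r(v)$ it observes at round $2t-1$ (using the conflict-graph coloring to rule out simultaneous same-class proposals to a shared provider), then show that the greedy batch rule on the true order dominates any alternative given those fixed capacities. Your closed form $\sum_{j=1}^i p(v,j) = \min\{\ell(v), \sum_{j=1}^i r_j\}$ together with the bounds $q_j \le r_j$ and $\sum_j q_j \le \ell(v)$ is a slightly cleaner (and marginally more general) way to finish than the paper's case analysis on which providers become saturated, but it is the same underlying argument.
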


This lemma means that by truthful reporting of preferences a client maximises the total matching assigned to any first $i$ most preferred providers. In particular, it means that it also maximises the total weight of its matching.

\begin{proof}
By Lemma \ref{lem:linear-order-extension}, the score function $s \circ \phi$ provides a strict linear order for each provider. Therefore when client $v$ makes its proposal, the adjacent providers receive no other proposals. At the beginning of each round $2t-1$ for $t = 1, \dots,S$, the clients (in preference class $t$) know the available capacities $r_t(v)$ of connected providers. Denote the capacity of the $i$th preferred provider $u$ by $r_t(v,i)$.

By round $2t-1$, all clients in preference classes $i < t$ have finalised their matches. It is important to note that a client cannot influence its preference class $(s\circ \phi)(v)$, determined by the coloring algorithm, nor can it affect the resources assigned to higher priority classes. Further, since the lower priority clients do not replace any of the load assigned in round $t$, the best $v$ can do is to optimise its matching given the current remaining capacities $r_t(v)$.

Now let $>_v$ denote the true preferences of $v$ and let $>'_v$ denote some different preference order. 
Let $m$ and $m'$ denote the assignments computed under $>_v$ and $>'_v$, respectively (with other preferences remaining constant). For each provider in $N(v)$, there are exactly two possibilities: either the provider becomes saturated ($v$ assigns $m(\{v,u_i\}) = r_t(v,i))$, or $v$ assigns its remaining load (and therefore assigns 0 load to remaining providers).

Now consider the matchings $m$ and $m'$. Again, there are two cases: either all providers become saturated, or not. In the first case, the assignments $m$ and $m'$ agree, and the claim holds trivially. In the second case, $m$ saturates some first $k$ most preferred providers, and then assigns some load to the $k+1$th most preferred provider. This directly implies that for all $i \in \{ 1,\dots,k\}$, we have that $\sum_{j=1}^i m(v,i) \geq \sum_{j=1}^i m'(v,i)$. However, since $\sum_{j=1}^{k+1} m(v,i) = \ell(v)$, we also must have $\sum_{j=1}^{k+1} m(v,i) \geq \sum_{j=1}^{k+1} m'(v,i)$.
\end{proof}

With Lemma~\ref{lem:batch-da-monotone}, we can prove Lemma~\ref{lem:batch-da}.

\begin{proof}[Proof of Lemma~\ref{lem:batch-da}] 
 
The score function $s$ and a $c$-coloring $\phi$ work the same way for the fractional version of the algorithm. Therefore, by Lemma \ref{lem:linear-order-extension}, the score function $s \circ \phi$ provides a strict order for the providers with $Sc$ preference classes. 

At each round $2t-1$ for $t \in \{1,\dots Sc \}$, the proposals of clients in preference class $t$ make their proposals and stop. Given the one round of additional preprocessing, the modified deferred acceptance algorithm runs in $2Sc$ rounds. The coloring takes, by definition, $O(T_c(n,\Delta))$ rounds and the total running time is $O(T_c(n,\Delta)+S\cdot c)$.

Since the $s \circ \phi$ are strict, each provider receives at most one proposal in each round and these are always accepted. The resulting matching is stable, as the clients compute the best available matching with respect to existing matching. By Lemma~\ref{lem:batch-da-monotone} the algorithm is also incentive-compatible, as the matching computed given true preferences is always preferred by a client.
\end{proof}

Note that here we assumed that initially all capacities are values that can be sent in one round in the CONGEST model. The operations performed are addition and subtraction, and we assume that the values obtained this way can also be sent in the CONGEST model.

\section{Fair tie-breaking} \label{sec:fairness}

In this section we consider two \emph{fair} ways of breaking ties in the deferred acceptance algorithm. This does not affect the truthfulness of the algorithm, but the goal is to provide an additional guarantee: despite having ties, from the perspective of each client, each way of breaking ties is (almost) equally probable. This implies that the algorithm not biased against any of the clients.

We define a tie-breaking rule as a function $s'$ that assigns a new preference class to each client from some set $S'$ such that each pair of clients that are assigned the same preference class by the original score function $s$ and share an adjacent provider, are assigned a different class. Further, it must respect the original order, i.e.\ if $s(v) < s(u)$, then $s'(v) < s'(u)$. The tie-breaking rule given by a proper coloring of the conflict graphs, as described in Section~\ref{ssec:basic-alg} satisfies these properties. 

Tie-breaking is uniformly random, if each legal tie-breaking is equally likely. A uniformly random coloring $\phi$ of the conflict graph $H$, where all clients are colored from a palette of the same size $c$, would give a uniformly random tie-breaking rule $s \circ \phi$, as described in Section~\ref{ssec:basic-alg}. Computing a uniformly random coloring in the distributed setting is hard, so we will consider two different ways of computing almost random colorings.

\subsection{Sampling an almost random coloring} \label{ssec:sample-coloring}

The first approach is to use an existing sampling algorithm to find an almost random coloring. The state of the art distributed sampling algorithm is due to Carlson et al.\ \cite{carlson23improved}.
\begin{theorem}[\cite{carlson23improved}, Theorem 1]
  For all $\varepsilon > 0$, all $\Delta \geq 2$, all $\delta > 0$, and any $k > (11/6+\varepsilon)\Delta$, for any graph $G = (V,E)$ of maximum degree $\Delta$, a random $k$-coloring within total variation distance $\leq \delta$ from uniform can be generated in $O(\log (n/\delta))$ rounds.
\end{theorem}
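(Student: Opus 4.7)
The plan is to implement a distributed version of the single-site Glauber dynamics on proper $k$-colorings and show that a suitable parallelization mixes to within total variation distance $\delta$ in $O(\log(n/\delta))$ rounds. The Glauber chain repeatedly picks a vertex uniformly at random and resamples its color uniformly from the palette not currently used by its neighbors. For $k > (11/6+\varepsilon)\Delta$, a path-coupling/spectral-independence argument (building on the analyses of Chen, Delcourt, Moitra, Perarnau, Postle, Vigoda and related works) yields that the expected Hamming distance between two coupled chains contracts by a factor $1-\Omega(1/n)$ per step, giving a sequential mixing time of $O(n \log(n/\delta))$ steps.

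My first step is to establish this single-site contraction, using path coupling: given two chains that differ only at one vertex $v$, the expected post-update distance is bounded by exploiting the slack between $k$ and $(11/6)\Delta$ to dominate the propagation from $v$ to its neighborhood. The second step is to turn the sequential chain into a parallel distributed algorithm. In each distributed round every vertex proposes a resample, and conflicts are resolved via a marking scheme so that only vertices whose local neighborhood contains no other concurrently updating vertex commit their new color. Each round simulates many independent single-site updates, and in the CONGEST model a round costs only $O(\log n)$ bits across each edge since only colors are exchanged.

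The main obstacle is the gap between the sequential mixing time $\Theta(n \log n)$ and the target round complexity $O(\log(n/\delta))$: a naive independent-set-based parallelization performs only $\Omega(n/\Delta)$ useful updates per round, giving $\tilde O(\Delta \log(n/\delta))$ rounds, which is too slow. To close this gap I would combine the Glauber chain with a shattering / network-decomposition preprocessing (in the style of Fischer--Ghaffari, and Feng--Sun--Yin for parallel/distributed sampling): in $O(\log n)$ rounds one produces a decomposition of $G$ into small low-diameter components, inside which heavy block updates can be executed so that effectively a constant fraction of vertices resamples per round. The delicate part of the analysis is to show that these block updates preserve the path-coupling contraction and the correct stationary measure, which is exactly where the $(11/6+\varepsilon)\Delta$ threshold is used to absorb the correlations introduced by simultaneous updates.

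Combining these pieces gives the stated bound: preprocessing in $O(\log n)$ rounds, followed by $O(\log(n/\delta))$ rounds of parallel block-Glauber updates, each realizable in CONGEST by exchanging one $O(\log n)$-bit color per edge. Summing the total variation contributions from the initial distribution error and from the discrepancy between the parallel chain and the ideal sequential Glauber dynamics yields an output distribution within $\delta$ of uniform over proper $k$-colorings, as claimed.
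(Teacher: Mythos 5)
The first thing to say is that the paper contains no proof of this statement: it is quoted verbatim from \cite{carlson23improved} and used as a black box (and in fact the paper's actual constructions rely on the Fischer--Ghaffari sampler, Theorem~\ref{thm:sample-coloring}, because it runs in CONGEST). So there is no in-paper argument to compare against, and your sketch has to stand on its own as a proof of the cited result. It does not, for two reasons. The central gap is the choice of Markov chain: single-site Glauber dynamics with one-step path coupling is only known to contract for $k > 2\Delta$; the $11/6$ threshold in the statement comes from Vigoda's \emph{flip dynamics}, a different chain that resamples alternating two-colored (Kempe-like) components rather than single vertices, and the distributed algorithm behind the cited theorem is a parallelization of that flip chain. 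Writing that ``a path-coupling/spectral-independence argument yields contraction'' for Glauber at $(11/6+\varepsilon)\Delta$ assumes away exactly the hard part: the spectral-independence results that reach $11/6$ establish mixing of a sequential chain but do not by themselves yield a local, $O(\log(n/\delta))$-round distributed sampler, and no one-step coupling for vertex-wise Glauber is known to work in that regime.

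The second gap is the parallelization. The known $O(\log(n/\delta))$-round distributed samplers (Feng--Sun--Yin's Local Metropolis, Fischer--Ghaffari's Local Glauber, and the flip-dynamics version underlying the cited theorem) need no network decomposition and no block updates on low-diameter components: every vertex proposes a fresh color in every round and a purely local, single-round conflict rule decides which proposals commit, so a constant fraction of vertices updates each round and the coupled distance contracts by a constant factor per round. Your decomposition-based plan would cost $\Omega(\log n)$ rounds per block update (the components have diameter $\Theta(\log n)$), giving $\Omega(\log^2 n)$ rounds overall, and the assertion that the block updates ``preserve the path-coupling contraction'' at the $11/6$ threshold is precisely the delicate coupling analysis that the sketch defers rather than supplies.
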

 
Fischer and Ghaffari present a simpler algorithm that finds a coloring with slightly more colors, but can also be used (in our context) in the CONGEST model.
\begin{theorem}[\cite{fischer18simple}, Theorem 1] \label{thm:sample-coloring}
  A uniform proper proper $q$-coloring of an $n$-node graph with maximum degree $\Delta$ can be sampled within total variation distance $\delta > 0$ in $O\bigl(\log (\frac{n}{\delta})\bigr)$ rounds, where $q = \alpha\Delta$ for any $\alpha > 2$.
\end{theorem}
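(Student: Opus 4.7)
The plan is to implement a carefully parallelised Glauber dynamics on the space of proper $q$-colorings and to argue rapid mixing via path coupling. In the sequential setting, when $q \geq \alpha \Delta$ with $\alpha > 2$, Jerrum's classical coupling argument shows that single-site heat-bath dynamics mix in $O(n \log(n/\delta))$ steps; my task is to replace this with a distributed scan that updates many sites per round while preserving both the stationary distribution (uniform over proper $q$-colorings) and the per-step contraction.

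First, I would obtain an initial proper $q$-coloring in $O(\log^* n)$ rounds using Linial's algorithm followed by greedy color reduction, which is a cheap additive cost on top of the target running time. Then in each round of the main loop every node $v$ attempts, independently and with a constant probability $p$, to resample its color uniformly from the $q - \deg(v)$ colors not currently used by its neighbors. To preserve detailed balance under parallel execution, an attempted update at $v$ is carried out only if no neighbor of $v$ simultaneously attempts one; otherwise $v$ keeps its old color. Each round is implementable in $O(1)$ communication rounds even in CONGEST, since only single colors (of $O(\log n)$ bits) need to be exchanged between adjacent nodes.

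The main analytic tool is path coupling in the style of Bubley--Dyer, applied to the Hamming distance between two coupled executions that start from configurations differing at a single vertex. A per-round calculation shows that, when $q > 2\Delta$, the expected Hamming distance contracts by a factor $1 - c$ for some constant $c = c(\alpha) > 0$: whenever a disagreement at a vertex $v$ is resampled the coupling succeeds with probability at least $1 - 2\Delta/q > 0$, and the probability of spreading a disagreement across an edge is proportionally small. Iterating for $O(\log(n/\delta))$ rounds drives the expected Hamming distance below $\delta/n$, and the coupling inequality then converts this into total variation distance at most $\delta$ from the uniform stationary distribution.

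The hardest part will be the per-round analysis of the parallel chain: simultaneous updates at adjacent vertices threaten both detailed balance and the clean single-site contraction that the sequential proof exploits. The rejection-on-neighbor-update rule restores detailed balance, but correlates update events across edges, so the path coupling computation needs a careful case split over whether the disagreement vertex $v$, its disagreement neighbor $u$, or both attempt an update in a given round. Verifying that the slack $\alpha - 2 > 0$ is enough to absorb these extra terms while still delivering constant per-round contraction is the technical heart of the argument.
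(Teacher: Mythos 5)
This theorem is not proved in the paper; it is imported verbatim from Fischer and Ghaffari, and the paper only sketches their algorithm inside the proof of Theorem~\ref{thm:fair-almost-uniform}. Your overall strategy---a parallelised Glauber dynamics analysed by path coupling \`a la Jerrum and Bubley--Dyer---is indeed the right family of ideas and is what Fischer and Ghaffari do. However, your specific parallelisation rule has a genuine gap. You reject an attempted update at $v$ whenever \emph{any} neighbour of $v$ simultaneously attempts one. With a constant attempt probability $p$, the probability that $v$ actually performs an update in a given round is $p(1-p)^{\deg(v)}$, which is exponentially small in $\Delta$; the per-round contraction of the Hamming distance is therefore $1 - c\cdot p(1-p)^{\Delta}$ rather than $1-c$, and the mixing time acquires a factor $e^{\Omega(p\Delta)}$. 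Tuning $p = \Theta(1/\Delta)$ repairs the update rate but only to $\Theta(1/\Delta)$ per node per round, yielding a mixing time of $O(\Delta\log(n/\delta))$. Either way you do not obtain the $\Delta$-independent $O(\log(n/\delta))$ bound that the theorem claims.

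The actual Local Glauber Dynamics avoids this by being less conservative: each node proposes a uniformly random colour from the \emph{full} palette $[q]$ and rejects only upon an actual collision, i.e.\ only if the proposed colour lies in some neighbour's current-or-proposed colour pair (this is exactly the rule restated in the paper's proof of Theorem~\ref{thm:fair-almost-uniform}). Since at most $2\Delta$ colours are thereby blocked and $q > 2\Delta$, every node's proposal succeeds with probability at least $1 - 2\Delta/q = 1 - 2/\alpha > 0$, a constant independent of how many neighbours are simultaneously active; proving that this non-independent-set rule still has the uniform distribution over proper colourings as its stationary distribution, and redoing the path-coupling case analysis for it, is the real technical content of Fischer--Ghaffari. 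Separately, your initialisation step is off: greedy colour reduction from Linial's $O(\Delta^2)$ colours down to $\alpha\Delta$ colours costs $\Omega(\Delta^2)$ additional rounds, not $O(\log^* n)$; with $q>2\Delta$ colours available one can instead obtain an initial proper colouring by the same random-trial process in $O(\log n)$ rounds, which fits the budget.
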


Given two distributions $\mu, \nu$ over the same space $\Omega$, the \emph{total variation distance} is defined as $\operatorname{d}(\mu, \nu) = \sum_{\sigma \in \Omega}|\mu(\sigma)-\nu(\sigma)|$. 

The disadvantage of using these algorithms compared to the algorithm we will present in Section~\ref{ssec:fast-coloring-failures} is that almost uniform sampling is slower when maximum degree is low: finding a coloring from a distribution with a total variation distance $\leq \delta$ to a random coloring requires $\Omega\bigl((\log (\frac{n}{\delta})\bigr)$ rounds~\cite{weiming17what}.

Using the algorithm of Fischer and Ghaffari we get the following theorem.
\begin{theorem} \label{thm:fair-almost-uniform}
  There exists a randomised distributed mechanism in the CONGEST model that computes a stable matching with high probability when the providers have common preferences. The mechanism is incentive-compatible for the clients, for any $\delta > 0$, the tie-breaking is within total variation distance $\delta$ from the uniformly random tie-breaking, and the algorithm implementing it runs in $O(\Delta_H S + O(\log(n/\delta))$ rounds.
\end{theorem}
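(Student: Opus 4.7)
The plan is to instantiate the algorithmic template of Section~\ref{ssec:basic-alg} (via Lemma~\ref{lem:mechanism-main}) with the Fischer--Ghaffari sampling algorithm (Theorem~\ref{thm:sample-coloring}) as the coloring subroutine on the virtual conflict graph $H$. Fix any $\alpha > 2$ (say $\alpha = 3$) and set $q = \alpha \Delta_H$. First, construct $H$ in $O(1)$ rounds as in step~\ref{step:form-conflict} of Section~\ref{ssec:basic-alg}; this is feasible in CONGEST since our modelling assumption (Section~\ref{sec:def}) guarantees that the communication network already contains every edge of $H$. Then run Fischer--Ghaffari on $H$ with palette size $q$ and parameter $\delta$ to produce a proper $q$-coloring $\phi$ in $O(\log(n/\delta))$ rounds. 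Finally, invoke the distributed deferred acceptance algorithm using the extended score function $s \circ \phi$; by Lemma~\ref{lem:linear-order-extension} this gives a strict order at every provider with $Sq = O(S\Delta_H)$ preference classes, and by Lemma~\ref{lem:da-common-preferences} this step runs in $O(S\Delta_H)$ rounds. Summing the three phases yields total running time $O(S \Delta_H + \log(n/\delta))$.

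For correctness I would apply Lemma~\ref{lem:mechanism-main} directly: stability of the resulting matching, incentive-compatibility for the clients, and termination rely only on $\phi$ being a proper coloring of $H$ computed without looking at the clients' reported preferences, both of which hold because Fischer--Ghaffari uses only the topology of $H$ together with shared randomness. The ``with high probability'' qualifier in the theorem statement absorbs the success probability of the sampling subroutine, if any.

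The only genuinely new content is the fairness guarantee. Here I would observe that, once $s$ and $q$ are fixed, the map $\phi \mapsto s \circ \phi$ defined by $(s \circ \phi)(v) = q(s(v) - 1) + \phi(v)$ is a bijection from the set of proper $q$-colorings of $H$ onto the set $R$ of legal tie-breaking rules with values in $\{1, \dots, Sq\}$ whose residues modulo $q$ form a proper coloring of $H$. Injectivity is immediate because $\phi(v) = ((s \circ \phi)(v) - 1) \bmod q + 1$, and surjectivity onto $R$ holds by definition of $R$. Since total variation distance is preserved under pushforward by a bijection, the distribution of $s \circ \phi$ produced by our algorithm lies within total variation distance $\delta$ of the uniform distribution over $R$, which is the natural meaning of ``uniformly random tie-breaking'' at palette size $Sq$.

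The main obstacle is really bookkeeping: one must pin down exactly which palette size and which ambient distribution over tie-breaking rules ``uniformly random'' refers to, after which the probabilistic content is inherited from Theorem~\ref{thm:sample-coloring} and the structural content from Lemma~\ref{lem:mechanism-main}. A secondary check is that Fischer--Ghaffari can be simulated on $H$ in CONGEST on the original network, which is immediate because $H$ is a subgraph of the communication network and so each round of the sampler translates to a single CONGEST round.
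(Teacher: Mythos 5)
Your proposal is correct and follows essentially the same route as the paper: instantiate Lemma~\ref{lem:mechanism-main} with the Fischer--Ghaffari sampler on the conflict graph $H$ and transfer the total variation bound from the coloring to the tie-breaking rule (your bijection/pushforward argument is a cleaner phrasing of the paper's ``each order is produced by the same number of colorings''). The only substantive difference is that you invoke the augmented-network assumption to simulate the sampler on $H$, whereas the paper additionally spells out how to run it in CONGEST even when the matching graph itself is the communication network, by relaying current and proposed colors through the providers.
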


We note that this theorem has an analogous variant for the fractional stable matching as well, as the computation of the fractional matching is independent of how the tie-breaking rule was obtained.

\begin{proof}
  The running time and the incentive-compatibility follow from Lemma~\ref{lem:mechanism-main} and Theorem~\ref{thm:sample-coloring} by using $c = 2\Delta+1$. Similarly, the bound on the total variation distance of the coloring implies directly a corresponding bound on the total variation distance of the tie-breaking rule, as each order is produced by the same number of colorings.
  
  We prove that the algorithm runs in the CONGEST model, even if the matching graph is the communication graph. The algorithm from Theorem~\ref{thm:sample-coloring} works as follows: Initially each node $v$ starts with some color $\phi(v)$. In each round, each node samples a new color $\phi'(v) \in [c]$ uniformly at random. Then it sends its current color $\phi(v)$ and new color $\phi'(v)$ to each of its neighbours. If for no neighbor $u$ it holds that $\phi'(v) \in \{ \phi(u), \phi'(u) \}$, then $v$ sets $\phi'(v)$ as its new color. Otherwise there is a potential conflict, and it keeps its old color. To implement this in the CONGEST model, the providers function as intermediaries: for each neighbor $u$, each provider $v$ checks whether there is a conflict with any of $u$'s neighbours in $H$ through $v$ and reports this back to $u$. If no provider of $u$ reports a conflict, $u$ assumes new color $\phi'(u)$.
\end{proof}

\subsection{Fast random coloring with failures} \label{ssec:fast-coloring-failures}

The second approach is to sample a coloring once and to fix any improperly colored parts arbitrarily. The sampling takes constant time, but the bias for the agents that fail to be properly colored is arbitrarily bad. Additionally the error probability is inversely related to the number of colors, and thus requiring high fairness slows down the matching step. We prove the following lemma.

\begin{lemma} \label{lem:fast-eps-fair-coloring}
  For any $0 < \delta < 1$ there exists a randomised distributed coloring algorithm with the following properties:
  \begin{enumerate}[noitemsep, label=(P\arabic*)]
    \item The running time is $O(\sqrt{\Delta \log \Delta} + \log^* n)$ rounds.
    \item There is a subset $F \subseteq V$ such that each node is in $F$ with probability at least $1-\delta$ (and thus the expected size of $F$ is $(1-\delta)n$) and $F$ is colored uniformly at random with $\lceil \delta^{-1} \rceil \Delta$ colors.
    \item Remaining nodes are properly colored with $\Delta+1$ colors from $\{ \lceil \delta^{-1} \rceil \Delta + 1, \dots, (\lceil \delta^{-1} \rceil+1)\Delta+2 \}$.
  \end{enumerate}
\end{lemma}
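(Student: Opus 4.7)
The plan is to use a one-shot random sampling followed by deterministic repair on a disjoint palette. Each node $v\in V$ samples an initial color $c_v$ independently and uniformly from the palette $[\lceil \delta^{-1}\rceil \Delta]$, announces $c_v$ to its neighbours in $H$ in one round (using the provider-relay idea of Section~\ref{ssec:congest-col} when the conflict graph is not directly part of the communication network), and we set $F = \{v : c_v \neq c_u \text{ for all $H$-neighbours $u$ of $v$}\}$. Nodes in $F$ keep the color $c_v$. The remaining nodes in $V\setminus F$ induce a subgraph of maximum degree at most $\Delta$, on which we run the Maus--Tonoyan deterministic $(\Delta+1)$-coloring algorithm~\cite{maus20linial} using the shifted palette $\{\lceil \delta^{-1}\rceil \Delta + 1,\dots,(\lceil \delta^{-1}\rceil+1)\Delta+1\}$, which is disjoint from $[\lceil \delta^{-1}\rceil \Delta]$, so the combined assignment is a proper coloring.

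For (P1), the sampling, neighbour exchange, and computation of $F$ take $O(1)$ rounds, while the Maus--Tonoyan step contributes $O(\sqrt{\Delta \log \Delta}+\log^* n)$ rounds; property (P3) is immediate from the construction, since both the palette disjointness and the properness of each component come for free. For the membership probability in (P2), the event $v\notin F$ is contained in the union of at most $\Delta$ collision events $\{c_u = c_v\}$, each of probability $1/(\lceil \delta^{-1}\rceil \Delta)$; a union bound yields $\Pr[v\notin F]\le \Delta/(\lceil \delta^{-1}\rceil \Delta) \le \delta$, and linearity of expectation gives $\mathbb{E}[|F|]\ge (1-\delta)n$.

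The only delicate point — and the part I expect to be the main obstacle — is the uniformity half of (P2): the distribution of $c_v$ conditional on $v\in F$ must be uniform over the full palette, and conditioning could in principle skew it. The clean way to handle this is a symmetry argument: for any permutation $\pi$ of the palette $[\lceil \delta^{-1}\rceil \Delta]$, simultaneously replacing every $c_w$ by $\pi(c_w)$ is a measure-preserving bijection on the sample space (since the $c_w$ are i.i.d.\ uniform), and it maps the event $\{v\in F,\ c_v = i\}$ to $\{v\in F,\ c_v = \pi(i)\}$ because membership in $F$ depends only on the equality pattern among $c_v$ and its neighbours, not on the specific color values. Hence all colors in $[\lceil \delta^{-1}\rceil \Delta]$ are equally likely conditional on $v\in F$, giving the uniformity required by (P2) and completing the lemma.
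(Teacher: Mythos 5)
Your proof follows essentially the same route as the paper's: one-shot uniform sampling from a palette of size $\lceil\delta^{-1}\rceil\Delta$, a union bound giving per-node failure probability at most $\Delta/(\lceil\delta^{-1}\rceil\Delta)\le\delta$, and a deterministic $(\Delta+1)$-coloring of the conflicted nodes on a disjoint shifted palette via Maus--Tonoyan. The only difference is that you justify the conditional uniformity in (P2) with an explicit palette-permutation symmetry argument where the paper simply asserts that ``conditioned on having no conflicts, each color is equally likely''; this is a welcome extra detail rather than a divergence.
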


Recall that in Section~\ref{sec:ff-da} the algorithm requires that the conflict graph of clients is colored. In what follows, let $H = (V,E)$ denote the conflict graph and let $\Delta$ denote the maximum degree of the conflict graph. As input, the algorithm takes parameter $\delta$: this is the target probability for coloring failure at a single node. Our first coloring algorithm consists of three steps:
\begin{enumerate}[noitemsep]
  \item Each node $v$ picks a color $\phi(v)$ uniformly at random from $\{1, \dots, \lceil \delta^{-1} \rceil \Delta \}$. Here assume that $\lceil \delta^{-1} \rceil \Delta \geq \Delta+1$.
  \item Let $X$ denote the set of nodes with coloring conflicts: if $v \in X$ then there exists $u$ such that $\phi(v) = \phi(u)$. Color nodes in $X$ with $\Delta+1$ colors from the palette $\{\lceil \delta^{-1} \rceil\Delta+1, \dots, (\lceil \delta^{-1} \rceil+1)\Delta+2\}$.
\end{enumerate}

The idea behind this simple algorithm is that all nodes that are either properly colored in step 1 are colored completely at random. The fraction of nodes colored properly in step 1 is at least $(1-\delta)$. The remaining nodes are colored in a way that does not affect the matching computed by nodes colored in step 1.

\begin{proof}[Proof of Lemma~\ref{lem:fast-eps-fair-coloring}]
  The running time of the algorithm is dominated by step 2, as step 1 requires no communication (we assume that the value $\Delta$ or an upper bound for it is common knowledge). Computing a $(\Delta+1)$-coloring of the nodes that fail in step 1 can be computed in $O(\sqrt{\Delta \log \Delta} + \log^* n)$ rounds~\cite{maus20linial}. This establishes P1.
  
  Next, we establish property 2. In step 1, each node $v$ picks a random color from $\{1, \dots, \lceil \delta^{-1} \rceil \Delta \}$. For each neighbor, the probability of a conflict is $1/(\lceil \delta^{-1} \rceil \Delta)$. By a union bound, the total probability of a conflict over at most $\Delta$ neighbors is bounded by $1/\lceil \delta^{-1} \rceil \leq 1/\delta$.
  
  The coloring is uniformly random for nodes in $F = V \setminus X$, as conditioned on having no conflicts, each color is equally likely. This establishes P2.
  
  Finally, P3 follows from the fact that subgraph induced by $X$ has maximum degree $\Delta$ and can therefore be colored with $\Delta+1$ colors.
\end{proof}

Using this algorithm for the template given in Lemma~\ref{lem:mechanism-main} we get the following. The tie-breaking rule guarantees that there is a large set of agents that have fair tie-breaking in the sense that 1) the tie-breaking is completely random among them, and 2) the remaining agents have worse tie-breaking. The trade-off is that the size of the fair set affects the number of colors and thus the running time of the deferred acceptance algorithm.

\begin{theorem} \label{thm:fast-fair-da}
  For any $\delta > 0$, there exists a randomised distributed mechanism in the LOCAL model for stable matching when the providers have common preferences. The mechanism is incentive-compatible for the clients, there exists a subset $F$ of clients of expected size $(1-\delta)$ such that the tie-breaking is uniformly random when restricted to $F$, and $C \setminus F$ has arbitrary tie-breaking at lower preference than $F$. The running time of the algorithm is $O(\delta^{-1} S\Delta_H + \log^* n)$.
\end{theorem}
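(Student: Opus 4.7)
The plan is to instantiate the generic template of Lemma~\ref{lem:mechanism-main} with the coloring subroutine from Lemma~\ref{lem:fast-eps-fair-coloring}. Concretely, I first construct the conflict graph $H$ as in Section~\ref{ssec:basic-alg}, then run the three-phase coloring of Lemma~\ref{lem:fast-eps-fair-coloring} with parameter $\delta$ on $H$, and finally run the distributed deferred acceptance algorithm with the derived score function $s \circ \phi$. The coloring produces a palette of size $c = (\lceil \delta^{-1}\rceil+1)\Delta_H + 2 = O(\delta^{-1}\Delta_H)$, within which the clients properly colored at random form the set $F$ (with expected size $(1-\delta)n$ by property (P2)) and the remaining clients occupy the strictly larger colors $\{\lceil \delta^{-1}\rceil\Delta_H+1,\dots,c\}$ by property (P3).

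For the running time, Lemma~\ref{lem:mechanism-main} contributes $O(T_c(n,\Delta_H) + Sc)$ rounds. By property (P1) the coloring uses $T_c = O(\sqrt{\Delta_H \log \Delta_H} + \log^* n)$ rounds, and the deferred acceptance phase takes $O(Sc) = O(\delta^{-1} S \Delta_H)$ rounds by Lemmas~\ref{lem:linear-order-extension} and \ref{lem:da-common-preferences}. Since $\sqrt{\Delta_H \log \Delta_H}$ is dominated by $\delta^{-1}\Delta_H$, the total running time collapses to $O(\delta^{-1} S \Delta_H + \log^* n)$, matching the claim.

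Incentive-compatibility is inherited from the template: the construction of $H$ depends only on provider-side preferences, and the coloring in Lemma~\ref{lem:fast-eps-fair-coloring} depends only on $H$, identifiers, and random bits. Therefore clients cannot influence $\phi$ by misreporting, and the truthfulness argument from Lemma~\ref{lem:mechanism-main} (via Lemma~\ref{lem:da-common-structure}) applies verbatim for every fixed realisation of randomness. For the fairness statement: conditional on $v \in F$ the color $\phi(v)$ is uniform on $\{1,\dots,\lceil \delta^{-1}\rceil\Delta_H\}$, so restricting the induced tie-breaking to $F$ gives a uniformly random ordering among tied clients of $F$, while clients in $C \setminus F$ receive colors strictly larger than any color in $F$, i.e.\ strictly lower preference under $s \circ \phi$.

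The only mild subtlety I expect is the \emph{universal} nature of incentive-compatibility: the statement requires truthfulness to be a dominant strategy for every outcome of the random coins, not merely in expectation. This is handled because the partition $C = F \cup (C \setminus F)$ and the coloring $\phi$ are deterministic functions of the random bits together with the graph and identifiers, so fixing the randomness produces a deterministic tie-breaking rule that does not depend on any reported preferences of the clients, and Lemma~\ref{lem:da-common-structure} then yields truthfulness pointwise.
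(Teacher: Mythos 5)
Your proposal is correct and follows exactly the paper's route: the paper's own proof is a one-line citation of Lemma~\ref{lem:mechanism-main} and Lemma~\ref{lem:fast-eps-fair-coloring}, and you have simply spelled out the instantiation (palette size, running-time arithmetic, inheritance of truthfulness, and the fairness split between $F$ and $C \setminus F$) in more detail. No gaps.
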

  
\begin{proof}
  The properties of the algorithm follow from Lemma~\ref{lem:mechanism-main} and Lemma~\ref{lem:fast-eps-fair-coloring}.
\end{proof}

Again, this theorem has an analogous variant for the fractional stable matching as the modification to the deferred acceptance algorithm is unaffected.

\bibliographystyle{ACM-Reference-Format}
\bibliography{bibliography}

\end{document}